\renewcommand{\(}{\left(}
\renewcommand{\)}{\right)}
\renewcommand{\[}{\left[}
\renewcommand{\]}{\right]}
\renewcommand{\v}[1]{\boldsymbol{#1}} % \v -> vector (bf)
\newtheorem{theorem}{Theorem}
\newtheorem{corollary}{Corollary}[theorem]
\newtheorem{fact}[theorem]{Fact}
\theoremstyle{definition}
\newtheorem{definition}{Definition}[section]
\newtheorem*{remark}{Remark}
\begin{document}

\title{Emergent unitary designs for encoded qubits from coherent errors and syndrome measurements} 

\author{Zihan Cheng}
\affiliation{Department of Physics, University of Texas at Austin, Austin, TX 78712, USA}

\author{Eric Huang}
\affiliation{Joint Center for Quantum Information and Computer Science, NIST/University of Maryland, College Park, MD 20742, USA}

\author{Vedika Khemani}
\affiliation{Department of Physics, Stanford University, Stanford, CA 94305, USA}

\author{Michael J. Gullans}
\affiliation{Joint Center for Quantum Information and Computer Science, NIST/University of Maryland, College Park, MD 20742, USA}

\author{Matteo Ippoliti}
\affiliation{Department of Physics, University of Texas at Austin, Austin, TX 78712, USA}

\date{\today}

\begin{abstract}
Unitary $k$-designs are distributions of unitary gates that match the Haar distribution up to its $k$-th statistical moment. They are a crucial resource for randomized quantum protocols. However, their implementation on encoded logical qubits is nontrivial due to the need for magic gates, which can require a large resource overhead. In this work, we propose an efficient approach to generate unitary designs for encoded qubits in surface codes by applying local unitary rotations (``coherent errors'') on the physical qubits followed by syndrome measurement and error correction. We prove that under some conditions on the coherent errors (notably including all single-qubit unitaries) and on the error correcting code, this process induces a unitary transformation of the logical subspace. 
We numerically show that the ensemble of logical unitaries (indexed by the random syndrome outcomes) converges to a unitary design in the thermodynamic limit, provided the density or strength of coherent errors is above a finite threshold. This ``unitary design'' phase transition coincides with the code's coherent error threshold under optimal decoding. 
Furthermore, we propose a classical algorithm to simulate the protocol based on a ``staircase'' implementation of the surface code encoder and decoder circuits. This enables a mapping to a 1+1D monitored circuit, where we observe an entanglement phase transition (and thus a classical complexity phase transition of the decoding algorithm) coinciding with the aforementioned unitary design phase transition. 
Our results provide a practical way to realize unitary designs on encoded qubits, with applications including quantum state tomography and benchmarking in error correcting codes.
\end{abstract}

\maketitle

\tableofcontents

%%%%%%%%%%%%%
%%% INTRO %%%
%%%%%%%%%%%%%

\section{Introduction \label{sec:intro} }

Quantum error correction is crucial to the realization of scalable quantum computations with real, noisy hardware. At the same time, encoded quantum information tends to be harder to manipulate. Quantum error correcting codes have some restricted set of logical operations that can be implemented ``transversally'', i.e., by separately rotating individual physical qubits---this is desirable for fault-tolerance, as transversal operations do not propagate errors. But these transversal logical operations are not universal for quantum computing~\cite{eastin_restrictions_2009}. For surface codes and other codes of practical interest, ``magic'' (i.e., non-Clifford) gates are among those that cannot be implemented transversally. Strategies to overcome this issue, such as magic state distillation, come with some overhead in terms of auxiliary qubits or circuit size~\cite{knill_fault-tolerant_2004,bravyi_universal_2005,hastings_distillation_2018,wills_constant-overhead_2024}. This motivates the search for alternative strategies that can implement generic unitary operations on encoded qubits without additional resources.

In this work we address this issue within the context of {\it random unitary gates}, which are a key component of various quantum information processing protocols like randomized benchmarking, randomized measurements for state- and process-learning, cryptography, random circuit sampling, quantum simulation, etc. Our strategy is based on the application of transversal physical operations (which may be seen as ``coherent errors'', though applied intentionally and known to the experimentalist) followed by syndrome extraction and error correction, as sketched in Fig.~\ref{fig:main_idea}(a). The Born-rule randomness inherent in quantum measurements automatically gives rise to a random distribution of operations on the encoded information; remarkably, under some criteria that we characterize, these operations are unitary, and the resulting ensemble can achieve a universal form in the thermodynamic limit (namely a unitary $k$-design, i.e., an approximation to the Haar distribution on the unitary group). 

This phenomenon is reminiscent of ``deep thermalization''~\cite{Choi2023preparing,Cotler2023emergent,ho_exact_2022,ippoliti_solvable_2022}, the emergence of universal distributions of post-measurement {\it states} on a spatially local subsystem from measuring its environment, but with two key differences: the distribution is over unitary operators instead of states, and the partition between measured and un-measured degrees of freedom is not based on real space, but rather on {\it syndrome} versus {\it logical} degrees of freedom. 

In this work we focus on the two-dimensional surface code, and find that the emergence of unitary designs in the thermodynamic limit occurs only above a critical strength (or density) of the coherent errors, as summarized in Fig.~\ref{fig:main_idea}(b).
Remarkably, this threshold for design formation is also a threshold for three other related but distinct phenomena:
(i) an {\it error correction} threshold for the code under coherent errors and optimal decoding;
(ii) an {\it entanglement} phase transition in an associated $(1+1)$-dimensional monitored dynamics;
(iii) a {\it computational complexity} phase transition for a classical decoder based on matrix product states that takes advantage of the aforementioned entanglement transition.
Our results add to other recently discovered examples of coding~\cite{Turkeshi2024error}, ``magic''~\cite{niroula_phase_2024}, entanglement~\cite{behrends_surface_2024} and delocalization~\cite{Venn_coherent_2023} transitions in similar setups involving coherent errors on encoded qubits. 

% \subsection{Summary of results}
% \begin{itemize}
%     \item Syndrome-dependent logical unitaries from physical unitaries
%     \item Projected ensembles of logical unitaries: a design/non-design phase transition controlled by the strength of physical unitaries
%     \item Classical simulation method enabled by the staircase circuit representation of the decoding problem
%     \item Three perspectives of the phase transition: (1) design/non-design phase transition in the projected ensemble; (2) the error correction threshold for the maximum likelihood decoder; (3) entanglement phase transition in the staircase circuit representation: easiness-hardness transition of the decoding algorithm.
% \end{itemize}

The rest of this paper is structured as follows. In Sec.~\ref{sec:background} we review essential background concepts including state and unitary designs, projected ensembles, and measurement-induced phase transitions. Sec.~\ref{sec:pe_unitary} contains our first main result, regarding projected ensembles of unitary operations arising from coherent errors and syndrome measurements in error correcting codes. 
In Sec.~\ref{sec:emergent_unitary_design} we numerically study the emergence of unitary designs from these ensembles, focusing on the surface code. We then introduce a mapping to $(1+1)$-dimensional monitored dynamics via staircase encoder/decoder circuits in Sec.~\ref{sec:staircase}, where we numerically uncover an entanglement transition and analytically connect it to the unitary design threshold. 
Finally we discuss our results and point to future directions in Sec.~\ref{sec:discussion}.

\begin{figure*}
    \centering
    \includegraphics[width=1\linewidth]{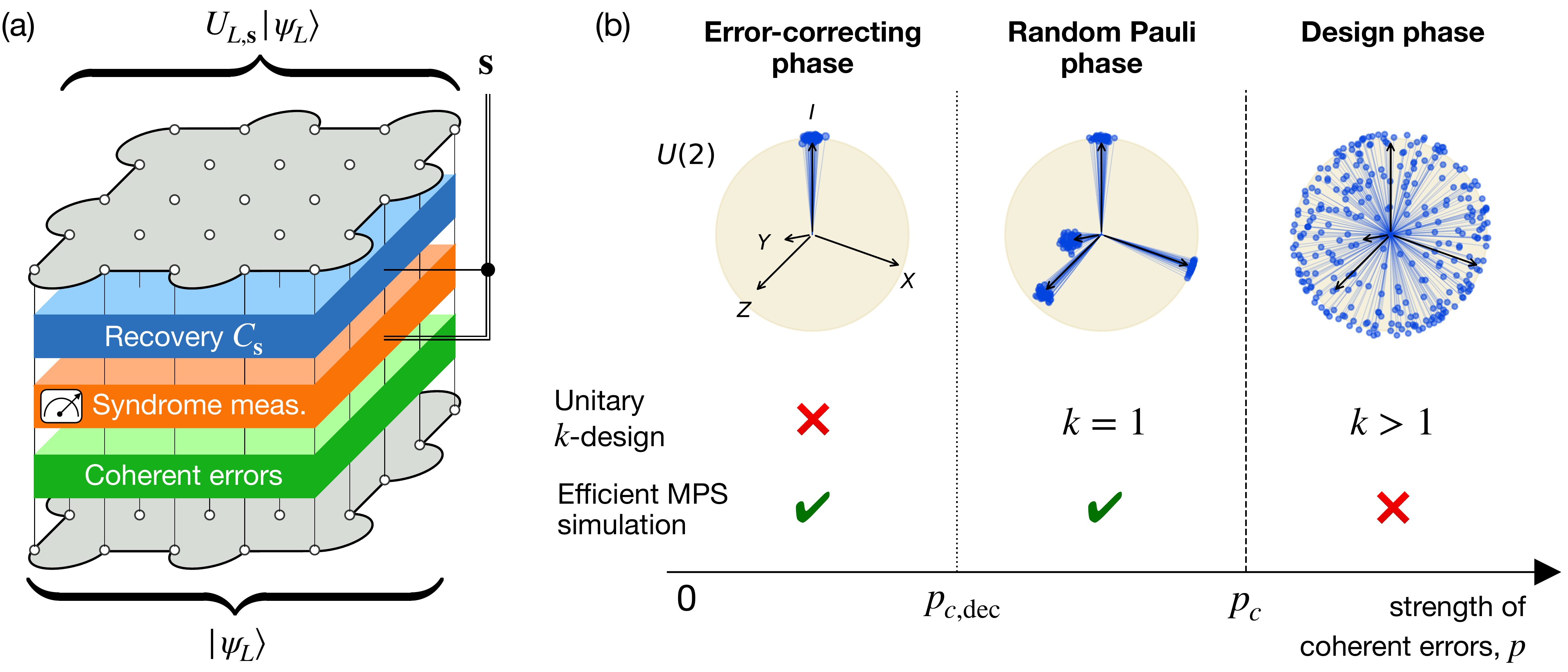}
    \caption{Main ideas of this work. (a) Intentional application of coherent errors to a quantum error-correcting code, followed by syndrome measurement and correction, yields (under the assumptions of Theorem~\ref{theo:logicalunitary}) a logical unitary operation that depends on the syndrome outcome $\mathbf s$, i.e., a projected ensemble of logical unitaries. 
    (b) Depending on the strength of coherent errors, the projected ensemble can take sharply different forms in the limit of large code size: an ``error-correcting phase'', where all unitaries in the ensemble are very close to the identity with high probability; a ``unitary design phase'', where the distribution of unitaries becomes highly uniform over the logical unitary group $U(2)$; and an intervening ``random Pauli phase'' where the distribution clusters around Pauli unitaries $I$, $X$, $Y$, $Z$ with equal probability. Qualitative sketches of these behaviors are shown at the top. The threshold $p_{c,{\rm dec}}$ between the error-correcting and random-Pauli phase depends on the choice of decoding algorithm, while the threshold $p_c$ of the unitary design phase is intrinsic (achieved by the optimal decoder). We find that it also corresponds to a transition in complexity for a MPS-based classical simulation algorithm.
    }
    %The projected ensemble of logical unitaries. Applying a class of intended coherent errors on a logical state $|\psi_L\rangle$, followed by syndrome measurements and associated Pauli correction $C_{\v{s}}$, gives an output logical state $U_{L,\v{s}}|\psi_L\rangle$ related to the input state by a logical unitary operation that is conditional on the syndrome outcomes but independent of the input state. 
    %(b) Phase diagram tuned by the strength of coherent errors $p$. Top: illustrations of different projected ensembles of logical unitaries. Bottom: different behaviors of error correction, logical unitary designs, and efficient MPS simulation in three phases separated by decoder-dependent threshold (such as MWPM threshold) and optimal threshold.}
    \label{fig:main_idea}
\end{figure*}

\section{Background \label{sec:background}}

\subsection{State designs and unitary designs \label{sec:review_designs}}

Many protocols in quantum information science take advantage of the properties of random unitary operations or random quantum states~\cite{knill_randomized_2008,mcclean_barren_2018,huang_predicting_2020,elben_randomized_2023,mele_introduction_2024}.
However, Haar-random quantum states or unitaries have exponentially high complexity, making them impractical. This motivates the introduction of {\it designs}---distributions of quantum states or unitaries that capture only some statistical moments of the Haar distribution, but can be implemented efficiently~\cite{ambainis_quantum_2007,dankert_exact_2009,roberts_chaos_2017,hunter-jones_unitary_2019,schuster_random_2024}. Here we briefly review the definitions of quantum state and unitary designs and related concepts.

\begin{definition}[State ensemble, moment operator]
An ensemble of states is a probability distribution over the Hilbert space: $\mathcal{E} = \{p_i,\ket{\psi_i}\}$, where state $\ket{\psi_i}$ occurs with probability $p_i$. 
Given an ensemble of states $\mathcal{E} = \{ p_i,\ket{\psi_i}\}$, its $k$-th moment operator is
\begin{align}
    \rho^{(k)}_{\mathcal E} 
    = \mathbb{E}_{\psi \sim \mathcal{E}}[ (\ketbra{\psi})^{\otimes k} ]
    = \sum_i p_i (\ketbra{\psi_i})^{\otimes k}.
\end{align}
This is a mixed state on $k$ copies of the system.
\end{definition}

\begin{definition}[Unitary ensemble, twirling channel]
An ensemble of unitaries is a probability distribution over the unitary group $U(D)$ ($D$ is the Hilbert space dimension): $\mathcal{E} = \{p_i,U_i \}$, where the unitary operator $U_i$ occurs with probability $p_i$. 
Given an ensemble of unitaries $\mathcal{E} = \{ p_i,U_i\}$, its $k$-fold twirling channel is
\begin{align}
    \Phi^{(k)}_{\mathcal E}[\cdot] 
    = \mathbb{E}_{U \sim \mathcal{E}} \left[ U^{\otimes k}[\cdot] (U^\dagger)^{\otimes k} \right]
    = \sum_i p_i U_i^{\otimes k } [\cdot] (U_i^{\dagger})^{\otimes k}.
\end{align}
This is a channel on $k$ copies of the system.
\end{definition}

\begin{remark}
    More generally, for both definitions one could use a continuous rather than atomic probability measure, but that is not necessary for the purpose of this work.
\end{remark}

We are interested in ensembles of quantum states or unitaries that approximate certain statistical moments of the Haar distribution. This notion is made precise by the following definitions:
\begin{definition}[$\epsilon$-approximate $k$-design]
We say that an ensemble of states $\mathcal E$ forms an $\epsilon$-approximate quantum state $k$-design if 
\begin{align}
    \Delta^{(k)}_{\rm state} = \left\Vert\rho_{\mathcal{E}}^{(k)}-\rho^{(k)}_{\rm Haar}\right\Vert_1 \leq \epsilon,
\end{align}
where $k\geq 1$ is an integer, $\Vert A \Vert_{1} = \frac{1}{2} {\rm Tr}(\sqrt{A^\dagger A})$ is the trace norm, and $\rho^{(k)}_{\rm Haar} = \mathbb{E}_{\psi\sim{\rm Haar}} (\ketbra{\psi})^{\otimes k}$.
We say that an ensemble of unitaries $\mathcal{E}$ forms an $\varepsilon$-approximate unitary $k$-design if 
\begin{align}
    \Delta^{(k)} = \left\Vert\Phi^{(k)}_{\mathcal E} -\Phi^{(k)}_{\rm Haar}\right\Vert_\Diamond \leq \epsilon,
\end{align}
where 
\begin{equation} 
\Vert \Phi \Vert_\Diamond = \max_{X} \frac{ \Vert (\Phi \otimes \mathcal{I})(X) \Vert_1} {\Vert X\Vert_1}
\end{equation}
is the diamond norm and $\Phi^{(k)}_{\rm Haar}[\cdot] = \mathbb{E}_{U\sim {\rm Haar}}U^{\otimes k}[\cdot] (U^\dagger)^{\otimes k}$. Here $\mathcal{I}$ is the identity channel on a second copy of the Hilbert space, and $X$ is chosen among operators acting on both copies.
\end{definition}

\subsection{Projected ensembles \label{sec:review_pe}}

The projected ensemble, introduced by Refs.~\cite{Cotler2023emergent,Choi2023preparing}, is an ensemble of quantum states generated by projective measurements on part of a many-body system. The Born-rule randomness of measurement outcomes induces a probability distribution over wavefunctions of the subsystem left unmeasured; remarkably, this distribution can be a quantum state $k$-designs in many cases of interest. More generally, the emergence of universal distributions (including but not limited to $k$-designs) from the projected ensemble goes under the name of ``deep thermalization''~\cite{Choi2023preparing,Cotler2023emergent,ho_exact_2022,ippoliti_dynamical_2023,ippoliti_solvable_2022,bhore_deep_2023,mark_maximum_2024,chan_projected_2024}. Here we briefly review the construction of the projected ensemble for states; our work will generalize this construction to an ensemble of unitaries. 

Consider a pure quantum state $|\Psi \rangle_{AB} $ on a bipartite system $AB$ consisting of $N_A$ qubits in $A$ and $N_B$ qubits in $B$. Upon performing projective measurements on $B$, one obtains a measurement outcome (bitstring) $\v{z}\in\{0, 1\}^{N_B}$ on $B$ and a corresponding post-measurement state on $A$, $\ket{\psi_{\v z}}_A$. The projected ensemble can be defined as
\begin{align}
    \mathcal{E}_{\psi, A}=\{p(\v{z}), |\psi_{\v{z}}\rangle_A \},
\end{align}
where $\ket{\psi_{\v z}}_A = {}_B\!\braket{\v z}{\Psi}_{AB} / \sqrt{p(\v z)}$ is the (normalized) post-measurement state and $p(\v z)$ is the Born-rule probability of outcome $\v z\in \{0,1\}^{N_B}$. 

It has been shown that if $|\Psi\rangle_{AB}$ is a random many-body state, for sufficiently large $N_B$ the projected ensemble $\mathcal{E}_{\psi, A}$ forms an approximate state $k$-design with high probability~\cite{Cotler2023emergent}. 
Other universal distributions have been identified for Hamiltonian eigenstates at variable energy~\cite{Cotler2023emergent,mark_maximum_2024}, for fermionic or bosonic Gaussian states~\cite{lucas_generalized_2023,liu_deep_2024}, or for random states that respect a symmetry~\cite{chang_deep_2024}.

It is natural to ask whether a similar construction could be used to realize not just state designs, but also unitary designs. However a na\"ive attempt to generalize this construction runs into the issue that the operations generated by a projected ensemble are typically {\it not} unitary. To be precise, consider a unitary transformation $U_{AB}$ acting on a bipartite state $\ket{\psi}_A\otimes \ket{0}_B$, where $\ket{0}_B$ is some fixed state of $B$ while $\ket{\psi}_A$ is an arbitrary, variable input state on $A$; projective measurements on $B$ then give rise to an ensemble of operations on $A$ $\{K_{\v z} = {}_B\!\bra{\v z} U_{AB} \ket{0}_B\}$. These are Kraus operators that unravel a quantum channel $\Phi(\cdot) = {\rm Tr}_B [U_{AB} ([\cdot]_A \otimes \ketbra{0}_B) U_{AB}^\dagger]$ for which $U_{AB}$ is a Stinespring dilation.
Generically, these Kraus operators are non-unitary, but rather correspond to weak measurements of subsystem $A$. 
This is also manifested in the fact that the probability distribution over Kraus operators $K_{\v z}$ depends on the input state $\ket{\psi}$ on $A$, $p(\v z) = \bra{\psi} K_{\v z}^\dagger K_{\v z} \ket{\psi}$. So $\mathcal{E}$ is not an intrinsically-defined ensemble of transformations.
The two issues (unitarity of the transformations and independence of the probability distribution from the input state) are in fact equivalent:
\begin{fact}\label{fact:unitarity}
    Consider a set of Kraus operators $\{K_{\v{z}}\}$. The probability distribution $p(\v{z})=\Tr(K_{\v{z}}\rho K^\dagger_{\v{z}})$ is independent of the input state $\rho$ if and only if each $K_{\v{z}}$ is proportional to a unitary operator:
    \begin{align}
        K_{\v{z}}=\sqrt{p(\v{z})}U_{\v{z}}.
    \end{align}
\end{fact}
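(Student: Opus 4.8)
The plan is to prove both directions of the equivalence, treating the nontrivial ``only if'' direction as the main content. Throughout I would work on the finite-dimensional Hilbert space of the unmeasured system $A$, so that each $K_{\v z}$ is a square operator, and exploit that $\rho$ ranges over the full convex set of density matrices.

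For the ``if'' direction I would substitute directly: if $K_{\v z} = \sqrt{p(\v z)}\,U_{\v z}$ with $U_{\v z}$ unitary, then $K_{\v z}^\dagger K_{\v z} = p(\v z)\,I$, so $\Tr(K_{\v z}\rho K_{\v z}^\dagger) = p(\v z)\,\Tr(\rho) = p(\v z)$ independently of $\rho$. This one-line computation also pins the proportionality constant to $\sqrt{p(\v z)}$.

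For the ``only if'' direction I would set $M_{\v z} = K_{\v z}^\dagger K_{\v z}$, a positive semidefinite Hermitian operator, and rewrite $\Tr(K_{\v z}\rho K_{\v z}^\dagger) = \Tr(\rho M_{\v z})$. The hypothesis says this is constant over all $\rho$. The key step is a spectral argument: restricting to pure states $\rho=\ketbra{\phi}$ makes $\bra{\phi}M_{\v z}\ket{\phi}$ constant over all unit vectors; diagonalizing $M_{\v z}$ and taking $\ket{\phi}$ along distinct eigenvectors then forces all eigenvalues to agree, so $M_{\v z}=\lambda_{\v z}I$ with $\lambda_{\v z} = p(\v z) \ge 0$. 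When $\lambda_{\v z}>0$, the operator $U_{\v z}=K_{\v z}/\sqrt{\lambda_{\v z}}$ obeys $U_{\v z}^\dagger U_{\v z}=I$; in finite dimensions with equal domain and codomain this isometry is automatically unitary, yielding $K_{\v z}=\sqrt{p(\v z)}\,U_{\v z}$. The degenerate case $\lambda_{\v z}=0$ gives $K_{\v z}=0$ and $p(\v z)=0$, an outcome of zero probability that can be discarded.

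I expect the main obstacle --- indeed the only nonroutine point --- to be the spectral step showing that constancy of $\bra{\phi}M\ket{\phi}$ over all pure states forces $M\propto I$; the rest is direct substitution. A secondary subtlety worth stating explicitly is the finite-dimensional, square-operator setting, which is what upgrades the isometry relation $U_{\v z}^\dagger U_{\v z}=I$ to full unitarity $U_{\v z}U_{\v z}^\dagger=I$ and would fail for rectangular Kraus operators between spaces of unequal dimension.
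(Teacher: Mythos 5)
Your proposal is correct and follows essentially the same strategy as the paper: both directions reduce to showing that constancy of $\Tr(\rho\, K_{\v z}^\dagger K_{\v z})$ over all states forces the POVM element $E_{\v z}=K_{\v z}^\dagger K_{\v z}$ to be proportional to the identity, after which $K_{\v z}/\sqrt{p(\v z)}$ is a square isometry and hence unitary. The only (cosmetic) difference is in how $E_{\v z}\propto \mathbb{I}$ is established --- the paper evaluates matrix elements on three families of test states ($\ketbra{\alpha}$, real and imaginary superpositions) in an arbitrary basis, whereas you invoke the spectral theorem for the Hermitian operator $E_{\v z}$ and test on its eigenvectors; both are valid, and yours is marginally more economical.
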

\begin{proof}
This is a well-known fact in quantum information theory, here we report a simple proof for the sake of completeness.
The \textit{if} statement is obvious. As for the {\it only if}, we start by defining the POVM $\{ E_{\v{z}}=K^{\dagger}_{\v{z}}K_{\v{z}}\} $, such that $p(\v{z})=\Tr(E_{\v{z}}\rho)$ for any state $\rho$. Let $\{\ket{\alpha}\}$ be an orthonormal basis of the Hilbert space; then
\begin{align}\label{eq:lemma1}
    E_{\v{z}, \alpha\alpha} = {\rm Tr}(E_{\v z} \ketbra{\alpha}) = p(\v{z}) \ \forall\, \alpha.
\end{align}
Furthermore, for any pair of basis vectors $|\alpha\rangle$, $|\beta\rangle$ $(\beta\neq \alpha)$, we can let $\rho=\frac{1}{2}(|\alpha\rangle+|\beta\rangle)(\langle \alpha|+\langle \beta |)$, which yields
\begin{align}\label{eq:lemma2}
    E_{\v{z}, \alpha\beta}+E_{\v{z}, \beta\alpha}=0.
\end{align}
The same reasoning for $\rho=\frac{1}{2}(|\alpha\rangle+i|\beta\rangle)(\langle \alpha|-i\langle \beta |)$ yields 
\begin{align}\label{eq:lemma3}
    E_{\v{z}, \alpha\beta}-E_{\v{z}, \beta\alpha}=0.
\end{align}
Combining Eqs.\eqref{eq:lemma1}-\eqref{eq:lemma3}, we have 
\begin{align}
E_{\v{z},\alpha\beta} = p(\v{z}) \delta_{\alpha\beta}
\end{align}
and thus $K_{\v{z}}=\sqrt{p(\v{z})}U_{\v{z}}$ for some unitaries $U_{\v z}$. 
\end{proof}

Past works have identified a setting where one can obtain projected ensembles of unitary transformations, $\mathcal{E} = \{ p_{\v z}, U_{\v z} \}$ where the $U_{\v z}$ are unitary and the $p(\v z)$ are intrinsically defined; the key ingredient is quantum error correction.
Specifically %Ref.~\cite{Turkeshi2024error} focuses on random codes where such ensembles are realized generically but approximately, while 
Ref.~\cite{Bravyi2018correcting} focuses on the surface code and identifies certain unitary ensembles (made only of $z$ rotations) that can be realized from syndrome measurements.
In this work, we build on that result to allow for the realization of projected ensembles of {\it general} unitaries on a single qubit, thus enabling the formation of unitary designs.

\subsection{Monitored circuits and the SEBD algorithm \label{sec:review_sebd}}

Recent years have seen a surge of interest in ``monitored'' quantum dynamics~\cite{fisher_random_2023,Potter2022entanglement}: time evolution where unitary interactions coexist with measurements. The interest stems from the discovery of sharp thresholds, or phase transitions, in the structure of quantum correlations in these dynamics or in their output states. While there are many variations of these phenomena, the simplest instance is represented by local unitary circuits interspersed with projective measurements, where the rate $p$ of measurements (probability that a qubit is measured after each gate) tunes a transition from an area-law entangled phase to a volume-law entangled one~\cite{Skinner2019measurement,li_quantum_2018,Choi2020quantum,Bao2020theory}. These phases are closely related to ideas of quantum error correction and decoding~\cite{Gullans2020dynamical,Gullans2020scalable,fan_self-organized_2021,li_statistical_2021,li_cross_2023,noel_measurement-induced_2022,hoke_measurement-induced_2023}.

A significant application of monitored quantum dynamics lies in the classical simulation of quantum dynamics---even in the absence of measurements. As an example, noise in the form of single-qubit dephasing or depolarizing channels can be unraveled into measurements; quantum trajectory methods then effectively simulate a monitored time evolution, and it may be possible to leverage the entanglement phase transition for efficient tensor network simulation (depending on spatial dimensionality)~\cite{vovk_entanglement-optimal_2022,Cheng2023efficient,chen_optimized_2023}.

Another way to leverage the entanglement phase transition for efficient classical simulation, with or without noise, applies to sampling experiments on shallow circuits~\cite{Napp2022efficient,Cheng2023efficient}. There, an approach dubbed ``space-evolving block decimation'' (SEBD) effectively trades one spatial dimension for a time dimension, turning the $D$-dimensional shallow unitary circuit into a deep monitored circuit in $D-1$ dimensions. Loosely speaking, by sweeping through the circuit along one spatial direction, the measurements occurring at the end of the original circuit (responsible for the sampling) are effectively spread out throughout the dynamics. 
Crudely associating a ``measurement rate'' $p\sim 1/T$ where $T$ is the shallow circuit's depth (each qubit participates in $T$ unitary gates and one measurement) yields a phase transition in the complexity of SEBD controlled by the depth, with efficient simulation below a finite critical depth $T_c$.

In this work we adapt the SEBD approach to the distinct setting of syndrome measurements on the surface code---a nontrivial generalization since single-qubit measurements are essential to the formulation of SEBD. To this end we will leverage ``staircase'' encoder and decoder circuits for the surface codes. 

%%%%
%%%%
%%%%

\section{Projected ensemble of logical unitaries}\label{sec:pe_unitary}

In this section we study conditions under which syndrome measurements on an error correcting code give rise to unitary operations, building on and extending the results of Ref.~\cite{Bravyi2018correcting}. 

Based on the discussion in Sec.~\ref{sec:review_pe}, and specifically Fact~\ref{fact:unitarity}, a projected ensemble of {\it unitary} operators requires that the measurements reveal no information about the input state. This naturally points to the framework of quantum error correction, where syndrome measurements are carefully chosen so as to reveal no information about the logical input state. 
Consider a quantum error correcting code initialized in a logical state, then subject to some unitary evolution $U$ followed by syndrome measurement and correction. Generally $U$ takes the system outside the code subspace, while the syndrome measurement and correction returns it to the code subspace. The goal of error correction is to return the system to the same logical state it started from, but depending on $U$ this may fail, so the final logical state is generally different from the initial one. We therefore obtain an ensemble of linear transformations of the code space, and it is natural to ask under what conditions these transformations are unitary.

A setting where this was shown to be the case is the surface code under coherent $Z$ errors~\cite{Bravyi2018correcting,Venn2020error-correction}. Ref.~\cite{Bravyi2018correcting} shows that starting with a logical state $|\psi_L\rangle$ of the surface code, single-qubit $Z$-rotations $U=\prod_j\exp(i\eta_jZ_j)$ followed by syndrome measurement and correction result in an effective logical $Z$-rotation $U_{L,\v{s}}=\exp\(i\theta_{\v{s}}Z_L\)$ (indexed by the syndrome outcome $\v{s}$) provided the surface code has odd distance. 
Importantly, by Fact~\ref{fact:unitarity}, unitarity of $U_{L,\v{s}}$ implies that the syndrome probability distribution $p(\v{s})$ is independent of the initial logical state $|\psi_L\rangle$. As a result we have a well-defined projected ensemble of logical unitaries:
\begin{align}\label{eq:z_ensemble}
    \mathcal{E}_{U, L}=\{p(\v{s}), \exp\(i\theta_{\v{s}}Z_L\)\}.
\end{align}
However the proof technique employed in Ref.~\cite{Bravyi2018correcting} only applies to $Z$ rotations (or equivalently $X$ rotations, by the CSS duality of the surface code), which limits the projected ensemble to rotations about a single axis. Such ensembles clearly fail to form unitary designs for the $U(2)$ logical group (they may at most form designs for a $U(1)$ subgroup). 

Here we generalize the result above in two ways: by considering codes beyond the surface code, and by allowing physical operations beyond single-qubit $Z$ (or $X$) rotations. This allows the realization of general $U(2)$ unitaries on the logical qubit. 

We consider a general stabilizer code prepared in a logical state $|\psi_L\rangle$, and subject to a physical unitary $U$:
\begin{align}
    \vert\psi'\rangle = U\ket{\psi_L}.
\end{align}
The state $\ket{\psi'}$ is generally not a logical state ($U$ may be seen as a ``coherent error'' on the code).
Our goal is to find constraints on $U$ and on the error-correcting code such that the final logical state, obtained from $\ket{\psi'}$ by syndrome measurement and correction, is related to the input logical state $\ket{\psi_L}$ by a logical unitary operation that is {\it independent of the input state} (see Fig.~\ref{fig:main_idea}(a)). 

To address this question we must first introduce some notation. 
Let $\{g_i\}$ be a minimal set of stabilizer generators for the error-correcting code. We introduce the syndrome projectors $\Pi_{\v{s}}=\prod_i\frac{\mathbb{I}+(-1)^{s_i}g_i}{2}$. To each syndrome we associate a recovery operation $C_{\v s}$, i.e. a Pauli unitary that obeys $\Pi_{\v s} C_{\v s} = C_{\v s} \Pi_{\v 0}$ (it returns error states with syndrome $\v s$ to the code space, characterized by the trivial syndrome $\v 0$). Note that the $C_{\v s}$ are only defined modulo stabilizer and logical operations---the specific choice of recovery $\{C_{\v s}\}$ defines a decoder. Our statements here hold for any choice of $\{C_{\v s}\}$. 
With this notation, we can write the final state of the protocol as 
\begin{align}
    \ket{ \phi_{\v{s}, L}} = \frac{C_{\v s} \Pi_{\v s} U\ket{\psi_L}}{ \Vert C_{\v s} \Pi_{\v s} U\ket{\psi_L} \Vert },
\end{align}
occurring with probability $\bra{\psi_L}U^\dagger \Pi_{\v s} U\ket{\psi_L}$ (probability of measuring syndrome $\v s$). 
We aim to understand when the above reduces to an ensemble of unitary operations on the logical space.  
We answer this question by proving the following theorem:
\begin{theorem}\label{theo:logicalunitary}
    Consider a stabilizer code with one logical qubit and a unitary $U$ on the physical qubits.
    Assume the following three conditions hold: 
    \begin{itemize}
        \item[(1)] All the stabilizers have even Pauli weight;
        \item[(2)] The code is CSS, and its $X$ and $Z$ code distances $d_x$, $d_z$ are odd; 
        \item[(3)] The physical unitary $U$ commutes with the time-reversal transformation $\mathcal{T} = \mathcal{K} \prod_j (iY_j)$ (with $\mathcal{K}$ the complex conjugation);
    \end{itemize} 
    then
    \begin{equation}
        C_{\v s} \Pi_{\v s} U \Pi_{\v 0} = \sqrt{p(\v s)} U_{L, \v{s}} \Pi_{\v 0},
        \label{eq:thm1}
    \end{equation}
    where $p(\v{s})$ is a probability distribution over syndrome outcomes $\v s$ and $U_{L,\v{s}}$ are unitaries on the logical subspace.
\end{theorem}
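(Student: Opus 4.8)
The plan is to reduce the claim to a statement about a single Hermitian operator on the two-dimensional code space and then exploit the antiunitary symmetry $\mathcal{T}$ to pin it down. By Fact~\ref{fact:unitarity}, applied to the Kraus operators $K_{\v s} = C_{\v s}\Pi_{\v s}U$ restricted to the code space, it suffices to show that the associated POVM element is proportional to the code-space identity, i.e.
\begin{align}
    M_{\v s} := \Pi_{\v 0} U^\dagger \Pi_{\v s} U \Pi_{\v 0} = p(\v s)\,\Pi_{\v 0}
\end{align}
for some $p(\v s)\ge 0$ (using $C_{\v s}^\dagger C_{\v s}=\mathbb{I}$ and $\Pi_{\v s}^2=\Pi_{\v s}$). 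The operator $M_{\v s}$ is manifestly Hermitian and positive semidefinite, so on the qubit-like code space it can be written as $a\,\Pi_{\v 0} + \vec b\cdot\vec\sigma_L$ with $a,\vec b$ real; the goal is to prove $\vec b=0$.

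First I would establish that $\mathcal{T}$ leaves the code and all syndrome subspaces invariant. Since $\mathcal{T}$ sends each single-qubit Pauli $P\to -P$ (as follows from $iY\,\mathcal{K}\,P\,\mathcal{K}\,(iY)^\dagger=-P$), a Pauli string of weight $w$ transforms as $\mathcal{T}P\mathcal{T}^{-1}=(-1)^w P$. By condition~(1) every stabilizer generator has even weight, hence is $\mathcal{T}$-invariant; because $\mathcal{T}$ is antilinear and fixes the real coefficients $(-1)^{s_i}$, each factor $\tfrac{1}{2}(\mathbb{I}+(-1)^{s_i}g_i)$ and therefore each projector $\Pi_{\v s}$ commutes with $\mathcal{T}$. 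In particular $\mathcal{T}$ preserves the code space, so its restriction $\mathcal{T}_L := \mathcal{T}|_{\mathrm{code}}$ is a well-defined antiunitary on the logical qubit.

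Next I would identify $\mathcal{T}_L$ as the logical time-reversal. Using the CSS structure (condition~(2)) I pick pure-$Z$ and pure-$X$ string representatives $\bar Z,\bar X$ of the logical operators of minimal weights $d_z,d_x$; since these are odd, $\mathcal{T}\bar Z\mathcal{T}^{-1}=-\bar Z$ and $\mathcal{T}\bar X\mathcal{T}^{-1}=-\bar X$, which upon restriction give $\mathcal{T}_L Z_L\mathcal{T}_L^{-1}=-Z_L$ and $\mathcal{T}_L X_L\mathcal{T}_L^{-1}=-X_L$; antilinearity then forces $\mathcal{T}_L Y_L\mathcal{T}_L^{-1}=-Y_L$ as well. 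Thus $\mathcal{T}_L$ negates all logical Paulis, which is exactly the defining property of single-qubit time reversal. Finally, the symmetry hypothesis on $U$ closes the argument: combining $\mathcal{T}U\mathcal{T}^{-1}=U$ (condition~(3)), the identity $(\mathcal{T}U\mathcal{T}^{-1})^\dagger=\mathcal{T}U^\dagger\mathcal{T}^{-1}$, and the invariance of $\Pi_{\v 0},\Pi_{\v s}$ yields $\mathcal{T}M_{\v s}\mathcal{T}^{-1}=M_{\v s}$. A Hermitian operator invariant under an antiunitary that sends $\vec\sigma_L\to-\vec\sigma_L$ must have $\vec b=0$, so $M_{\v s}=p(\v s)\Pi_{\v 0}$. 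Summing over $\v s$ with $\sum_{\v s}\Pi_{\v s}=\mathbb{I}$ and unitarity of $U$ shows $\sum_{\v s} p(\v s)=1$, so $p$ is a probability distribution; the relation $M_{\v s}=p(\v s)\Pi_{\v 0}$ together with $C_{\v s}$ mapping the syndrome-$\v s$ subspace isometrically onto the code space yields Eq.~\eqref{eq:thm1} with $U_{L,\v s}$ unitary on the logical subspace.

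I expect the main obstacle to be the bookkeeping in the third step: controlling the sign $(-1)^w$ under the choice of logical-operator representative. The subtlety is that the weight, and hence the naive sign, depends on the representative, so one must check that condition~(1) (even-weight stabilizers) guarantees all $Z$-string (resp.\ $X$-string) representatives of a given logical share the same weight parity, making the logical action of $\mathcal{T}$ well defined; restricting attention to minimal-weight representatives of odd weight $d_z,d_x$ is the cleanest way to sidestep the ambiguity. Getting the interplay of conditions~(1) and~(2) exactly right here, rather than the final linear-algebra step, is where I would concentrate the care.
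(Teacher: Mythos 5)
Your proof is correct and rests on exactly the same mechanism as the paper's: the time-reversal grading under $\mathcal{T}$, with syndrome projectors $\mathcal{T}$-even (condition (1)), all logical operators $\mathcal{T}$-odd (conditions (1)+(2), including the representative-independence of weight parity that you flag), and $U$ preserving the grading (condition (3)). The only difference is presentational — you work in the Heisenberg picture, showing the POVM element $M_{\v s}$ is $\mathcal{T}$-invariant and hence proportional to $\Pi_{\v 0}$, whereas the paper evolves the logical density matrix and notes that its $\mathcal{T}$-odd part is traceless against $\Pi_{\v s}$ — so no substantive comparison is needed.
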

\begin{proof}
    Here we only provide an outline of the proof; see Appendix~\ref{ap:theo1proof} for details.
    For any Pauli operator $P$, one has $\mathcal{T} P \mathcal{T}^{-1} = (-1)^{|P|} P$ with $|P|$ the Pauli weight (number of non-identity Pauli matrices) of $P$. Conditions (1) and (2) imply that all nontrivial logical operators have odd Pauli weight, so all the logical information is stored in time-reversal-odd operators; this remains true after application of $U$, by condition (3). But the syndrome measurements only reveal information about time-reversal-even operators (by condition 1 on the stabilizer generators), so they reveal no information about the logical state. Unitarity then follows by Fact~\ref{fact:unitarity}. 
\end{proof}
\begin{corollary}
For any logical input state $\ket{\psi_L}$, the syndrome probability distribution on $U\ket{\psi_L}$ is $p(\v s) = \frac{1}{2} {\rm Tr}(U \Pi_{\v 0} U^\dagger \Pi_{\v s})$, independent of the input state; the output state conditional on syndrome $\v s$ is
\begin{equation}
    C_{\v s}\Pi_{\v s} U\ket{\psi_L}\propto U_{L,\v{s}} \ket{\psi_L}.
\end{equation}
\end{corollary}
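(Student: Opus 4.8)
The plan is to obtain the corollary as a direct consequence of Theorem~\ref{theo:logicalunitary}, with the operator identity \eqref{eq:thm1} supplying all the nontrivial content while the rest reduces to bookkeeping with norms and a single trace. First I would apply \eqref{eq:thm1} to an arbitrary logical input $\ket{\psi_L}$; since this state lies in the code space, $\Pi_{\v 0}\ket{\psi_L}=\ket{\psi_L}$, and the identity collapses to
\begin{align}
    C_{\v s}\Pi_{\v s} U\ket{\psi_L} = \sqrt{p(\v s)}\, U_{L,\v s}\ket{\psi_L},
\end{align}
which is exactly the claimed output state up to the scalar $\sqrt{p(\v s)}$, so that $C_{\v s}\Pi_{\v s}U\ket{\psi_L}\propto U_{L,\v s}\ket{\psi_L}$.

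Next I would identify $p(\v s)$ with the Born-rule probability of obtaining syndrome $\v s$, which is $\Vert \Pi_{\v s} U\ket{\psi_L}\Vert^2 = \bra{\psi_L}U^\dagger \Pi_{\v s} U\ket{\psi_L}$. Since the recovery $C_{\v s}$ is a Pauli (hence norm-preserving) unitary, I can insert it freely:
\begin{align}
    \Vert \Pi_{\v s} U\ket{\psi_L}\Vert^2 = \Vert C_{\v s}\Pi_{\v s} U\ket{\psi_L}\Vert^2 = p(\v s)\,\Vert U_{L,\v s}\ket{\psi_L}\Vert^2 = p(\v s),
\end{align}
using unitarity of $U_{L,\v s}$ on the code space. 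This shows the Born probability equals the $p(\v s)$ appearing in the theorem and, crucially, is independent of $\ket{\psi_L}$, in line with Fact~\ref{fact:unitarity}.

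Finally, for the explicit formula I would expand the code projector in a logical basis, $\Pi_{\v 0}=\ket{0_L}\!\bra{0_L}+\ket{1_L}\!\bra{1_L}$, which has trace $2$ since there is a single logical qubit. Using cyclicity of the trace together with the state-independence just established,
\begin{align}
    {\rm Tr}\!\left(U\Pi_{\v 0}U^\dagger \Pi_{\v s}\right) = \sum_{a=0,1}\bra{a_L}U^\dagger \Pi_{\v s} U\ket{a_L} = 2\,p(\v s),
\end{align}
so that $p(\v s)=\tfrac{1}{2}{\rm Tr}(U\Pi_{\v 0}U^\dagger \Pi_{\v s})$, completing the corollary.

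Given Theorem~\ref{theo:logicalunitary}, I do not expect a genuine obstacle: the corollary is essentially pure bookkeeping. The only points needing care are (i) recognizing that the scalar $\sqrt{p(\v s)}$ in \eqref{eq:thm1} must coincide with the Born amplitude, which relies on $C_{\v s}$ being norm-preserving, and (ii) retaining the dimension-$2$ factor from $\Pi_{\v 0}$ that yields the $\tfrac12$ prefactor. All of the substantive work---the proof that these operators are proportional to unitaries with a state-independent coefficient---already resides in Theorem~\ref{theo:logicalunitary} itself.
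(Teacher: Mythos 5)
Your proof is correct, and it treats the corollary exactly as the paper does: as an immediate consequence of Theorem~\ref{theo:logicalunitary}, obtained by applying Eq.~\eqref{eq:thm1} to a code-space vector and identifying $p(\v s)$ with the Born probability via unitarity of $C_{\v s}$ and $U_{L,\v s}$. The only place your route differs is the trace formula: the paper derives $p(\v s)=\tfrac{1}{2}{\rm Tr}(U\Pi_{\v 0}U^\dagger\Pi_{\v s})$ inside the appendix proof of the theorem itself, from the time-reversal parity decomposition $\rho_L=\rho_++\rho_-$ with $\rho_+=\tfrac12\Pi_{\v 0}$ and the observation that ${\rm Tr}(\Pi_{\v s}\,U\rho_-U^\dagger)=0$ because the odd-parity part consists of traceless logical operators; you instead recover the same formula a posteriori, using only the state-independence already guaranteed by the theorem together with the resolution $\Pi_{\v 0}=\ketbra{0_L}+\ketbra{1_L}$, which averages the Born probability over a logical basis and yields the factor $2$. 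Both arguments are valid and elementary; yours has the mild advantage of relying only on the statement of the theorem rather than on the internal structure of its proof, and it correctly pins down the operator ordering in the trace (note the appendix contains a harmless transposition, writing ${\rm Tr}(\Pi_{\v 0}U\Pi_{\v s}U^\dagger)$, whereas the corollary's form, which you reproduce, is the one that follows from the derivation).
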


Theorem~\ref{theo:logicalunitary} asserts that, when conditions (1-3) on the error-correcting code and physical operation $U$ are met, one indeed obtains a projected ensemble of unitary operations on the logical subspace. We will next discuss when these conditions arise. 

Condition (1), on the Pauli weight of stabilizers, is met whenever one has a set of stabilizer generators $\{g_i\}$ where each $g_i$ has even Pauli weight\footnote{Since any two $g_i$, $g_j$ commute, $g_i g_j$ is a Hermitian Pauli operator; thus $\mathcal{T} g_i g_j \mathcal{T}^{-1} = (-1)^{|g_i g_j|}$, but also $\mathcal{T} g_i g_j \mathcal{T}^{-1} = \mathcal{T} g_i \mathcal{T}^{-1} \mathcal{T} g_j \mathcal{T}^{-1} = (-1)^{|g_i|+|g_j|} g_i g_j$. Thus $|g_i g_j| = |g_i| + |g_j|\mod 2$. So if every element in a generating set $\{g_i\}$ has even Pauli weight, then each element of the stabilizer group does.}. 
This is true in many codes of interest, for instance in the toric code (4-body generators), rotated surface code (4-body generators in the bulk and 2-body on the boundaries, Fig.~\ref{fig:surfacecode}), color code (6-body generators), etc. 

Condition (2) is straightforward and can typically be realized in families of codes such as the surface code (Fig.~\ref{fig:surfacecode}) or the color code by a choice of system size. We note that this condition can be relaxed, as our proof only uses a weaker property: that all nontrivial logical operators must have odd Pauli weight. This is true for instance in the 5-qubit perfect code, which is non-CSS. 

Condition (3) is most easily met by setting $U = \bigotimes_{i=1}^n u_i$ where $\{u_i\}$ are arbitrary single-qubit unitaries. It is easy to see that $u = e^{-i \theta \mathbf{n} \cdot \boldsymbol{\sigma} /2}$ (rotation by $\theta$ around the $\mathbf n$ axis) commutes with the time-reversal symmetry, since $\mathcal{T} \sigma^\alpha = - \sigma^{\alpha}\mathcal{T}$ for $\alpha = x,y,z$, and thus $\mathcal{T} i \mathbf{n}\cdot\boldsymbol{\sigma} = -i \mathcal{T} \mathbf{n}\cdot\boldsymbol{\sigma} = + i \mathbf{n}\cdot\boldsymbol{\sigma} \mathcal{T}$, using anti-unitarity of $\mathcal{T}$. 
More generally, the condition $[U,\mathcal{T}]=0$ identifies a subgroup\footnote{
%%% footnote
This subgroup is isomorphic to the compact symplectic group $\textrm{USp}(2^{n-1})$, of dimension $2^{n-1}(2^n+1)$. This is defined as $\{ U\in U(2^n):\, U^T \Omega U = \Omega\}$ with $\Omega$ a standard symplectic form---a real antisymmetric matrix obeying $\Omega^2 = -I$ and $\Omega\Omega^T=\Omega^T\Omega=I$. One can rewrite $[U,\mathcal{T}] = 0$ as $U^T \Omega U = \Omega$ where $\Omega = \prod_{j=1}^n iY_j$ is real antisymmetric and obeys $\Omega^2 = (-1)^n I$ and $\Omega\Omega^T=\Omega^T\Omega=I$. Finally, conditions (1) and (2) imply that the code size $n$ is odd (e.g. the operator $\prod_{j=1}^n X_j$ commutes with all stabilizers and anticommutes with $Z_L$, so it is a nontrivial logical of weight $n$), giving $\Omega^2 = -I$.
} 
of $U(2^n)$, comprising all unitaries $U = e^{iH}$ where $H$ is a real linear combination of odd-weight Paulis. Thus Theorem~\ref{theo:logicalunitary} covers a large class of {\it many-body} coherent errors $U$. As an example one can consider local circuits of three-body gates such as $e^{-i\alpha ZZZ}$, interspersed by arbitrary single-qubit gates. 

We note that unitarity of the logical operation was proven for the special case of single-qubit $Z$ rotations on the odd-distance surface code in Ref.~\cite{Bravyi2018correcting}, and recently generalized to arbitrary single-qubit rotations in Ref.~\cite{darmawan_optimal_2024}. Theorem~\ref{theo:logicalunitary} generalizes the result to a much wider class of codes and coherent errors.

For the rest of the present work we focus on rotated surface codes with odd distance $d$, subject to arbitrary single-qubit physical rotations. The generalization to many-body physical unitaries is an interesting direction whose exploration we leave to future work. 

\begin{figure}
    \centering
    \includegraphics[width=\linewidth]{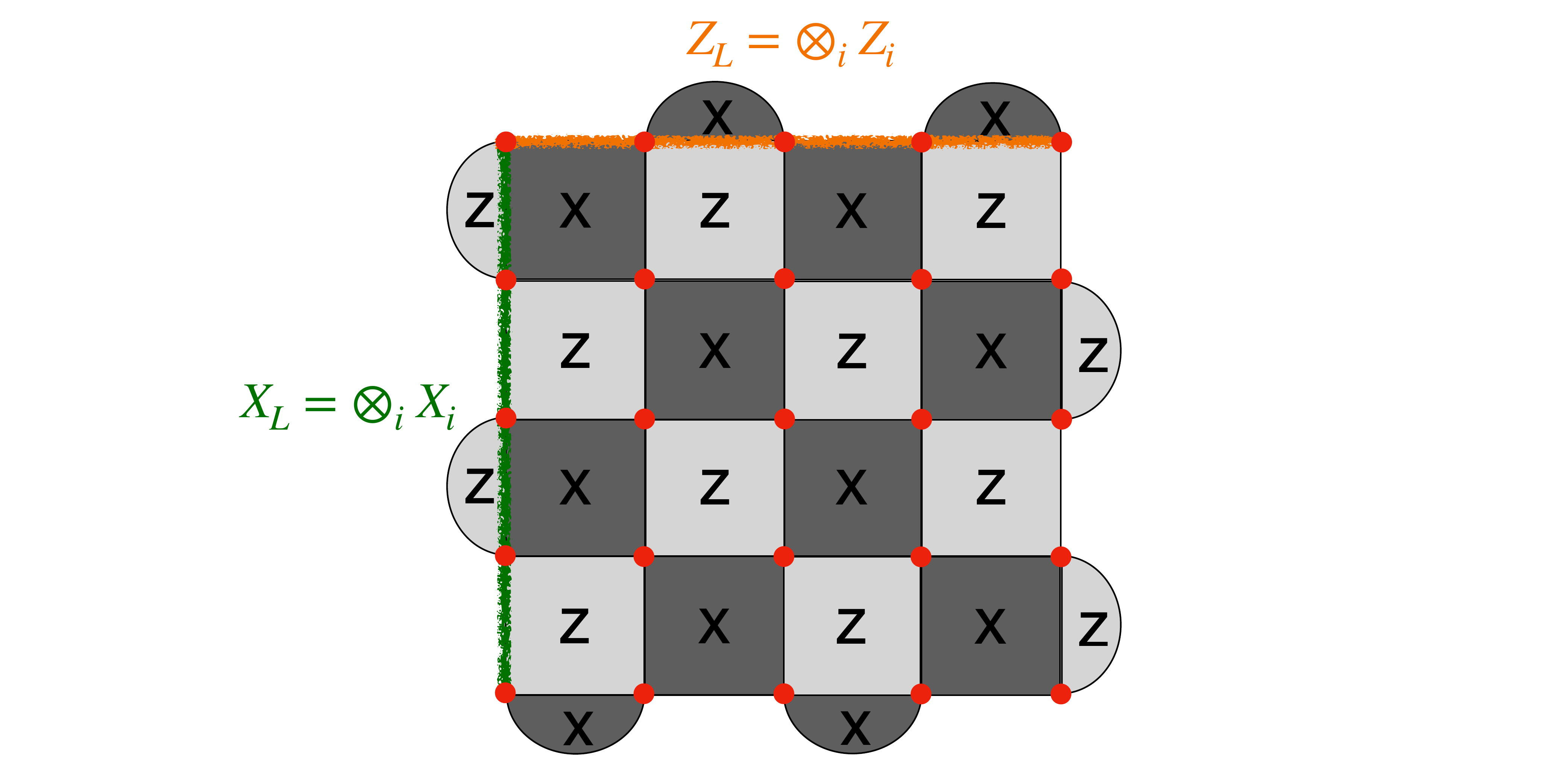}
    \caption{Rotated surface code with distance $d=5$. Physical qubits are denoted by red points. Logical operators $X_L$ and $Z_L$ are denoted by green and orange strings of $X$'s and $Z$'s respectively. We have $|X_L| = |Z_L| = 5$ and $|Y_L| = |i X_L Z_L| = 9$, so all logical operators have odd weight.}
    \label{fig:surfacecode}
\end{figure}

%%%
%%%
%%%

\section{Emergent logical unitary designs}\label{sec:emergent_unitary_design}

Theorem~\ref{theo:logicalunitary} allows the realization of projected ensembles of logical unitary operations 
\begin{align}\label{eq:projected_ensemble}
    \mathcal{E}_{U, L}=\{p(\v{s}), U_{\v{s}}\}
\end{align}
that in principle go beyond $Z$-rotations, thus extending the results of Ref.~\cite{Bravyi2018correcting} (see Eq.~\ref{eq:z_ensemble}). 
In analogy with the phenomenon of ``deep thermalization'' in projected ensembles of states (Sec.~\ref{sec:review_pe}), it is then natural to ask about the emergence of universal behavior in this ensemble, and specifically about the possibility of unitary designs.

When $U$ is weak enough, i.e. close enough to identity, the ``coherent error'' represented by $U$ is correctable, which means all logical unitaries in the projected ensemble approach the logical identity in the limit of large code distance, $d\to\infty$. Therefore logical unitary designs are only possible above the error threshold, i.e. for strong enough $U$. 
%Based on this, we conjecture that unitary designs will emerge above the error threshold. 
In the following we show numerically that unitary designs emerge as soon as coherent errors exceed the optimal coherent error threshold. The optimal error threshold thus also serves as the critical point for a ``unitary design phase transition''. 

Before proceeding, we first remark on the role of the decoder. Similar to error correction, the projected ensemble in Eq.~\ref{eq:projected_ensemble} depends on the choice of decoder, i.e., the assignment of a recovery operation $C_{\v s}$ to each syndrome $\v s \in \{0,1\}^{n-1}$. With different decoders, the logical unitaries may differ by a logical Pauli for some syndrome outcomes. Specifically, if a decoder yields corrections $\{C_{\v s}\}$ and another yields corrections $\{C'_{\v s} = \sigma_{\v s}C_{\v s}\}$ (with $\sigma_{\v s}$ logical Paulis), then the projected ensembles for the two decoders will be
\begin{equation}
    \mathcal{E}_{U,L} = \{p(\v s), U_{L,\v{s}}\}, \quad 
    \mathcal{E}_{U,L}' = \{p(\v s), \sigma_{\v s} U_{L,\v{s}}\}. 
\end{equation}
This ambiguity results in logical Pauli errors, which can induce a logical $1$-design for one decoder but not for another. Therefore, we study the formation of logical unitary ($k\geq2$)-designs separately from $1$-designs. While the former come from the intrinsic randomness of the projected ensemble, the latter are affected by the choice of decoder. %The ``random Pauli phase'' shown in the middle of the phase diagram in Fig.~\ref{fig:main_idea}(b) is only present when

% In this work we use the minimum weight perfect matching (MWPM) decoder, which is a standard choice for the surface code~\cite{dennis_topological_2002,fowler_towards_2012}.
% In addition, to tease out the effect of the decoder choice from the intrinsic randomness in the projected ensemble and connect it to error correction, we consider two more choices:
To tease out the effect of the decoder choice from the intrinsic randomness in the projected ensemble and connect it to error correction, we first consider two extreme decoders---optimal decoder and random decoder. We then consider a practical decoder which is intermediate between the two: the minimum weight perfect matching (MWPM) decoder, which is a standard choice for the surface code~\cite{dennis_topological_2002,fowler_towards_2012}. In our application of the MWPM decoder we use no prior information about the noise model. 
The optimal decoder and random decoder are defined as follows:
\begin{itemize}
    \item The optimal decoder~\cite{Venn2020error-correction}, 
    \begin{align}
        U_{{\rm opt},\v{s}}=\sigma_{{\rm opt},\v{s}}U_{L, \v{s}},
    \end{align}
    with 
    \begin{align}
        \sigma_{{\rm opt},\v{s}}=\operatorname*{arg\,max}_{\sigma \in \{I, X, Y, Z\}}|\Tr(\sigma U_{L,\v{s}})|^2,
    \end{align}
    where $U_{L,\v{s}}$ is the logical unitary obtained from any other decoder, for example MWPM. This choice is relevant to questions about error correction, but not about unitary designs. Indeed the projected ensemble constructed in this way cannot cover all of $U(2)$, for a trivial reason: the optimal decoder always returns unitaries that are closer to $I$ than to $X$, $Y$, or $Z$; such unitaries make up exactly one quarter of $U(2)$ (see sketch in Fig.~\ref{fig:main_idea}(b), top). This prevents the formation of 1-designs, and thus $k$-designs for any $k$. 
    \item A random decoder,
    \begin{align}
        U_{{\rm rand},\v{s}}= \sigma_{{\rm rand},\v{s}} U_{L, \v{s}},
    \end{align}
    where $U_{L, \v{s}}$ is the unitary obtained from any other decoder (for example MWPM or optimal) and $\sigma_{{\rm rand},\v s}$ is a logical Pauli unitary chosen uniformly from $\{I,X,Y,Z\}$, independently and identically for each syndrome $\v s$. 
    This choice is relevant to questions about unitary designs. It is also natural for applications since logical Paulis can always be implemented transversally. With this ``Pauli scrambling'' operation, the projected ensemble forms a 1-design with high probability, while formation of $k$-designs with $k\geq 2$ relies on the intrinsic randomness of syndrome outcomes.
\end{itemize}

\subsection{Unitary design phase transition}\label{ssec:design_transition}

We now investigate convergence of the projected ensemble to a unitary $k$-design numerically for $k=2,3$, using the design distance measure $\Delta^{(k)}$ (see Sec.~\ref{sec:review_designs}). Specifically, we consider odd-distance rotated surface codes, see Fig.~\ref{fig:surfacecode}, with the simplest choice of $U$ that meets the requirements of Theorem~\ref{theo:logicalunitary}: $U=\prod_j u_j$ with each $u_j$ a single-qubit gate acting on qubit $j$. 
To tune the strength of $U$ across the error-correction threshold, we sample each $u_j$ from the Haar measure on $U(2)$ with probability $p$, and we set $u_j = \mathbb{I}$ otherwise. We denote the resulting measure over the many-body unitary group $U(2^n)$ by $\mu_p$. The parameter $p\in[0,1]$ is what tunes the strength of $U$.

The polynomial-time classical algorithm in Ref.~\cite{Bravyi2018correcting}, based on a free Majorana fermion representation, relies on having only $Z$ rotations on the physical qubits, and fails for generic $u_j$ rotations. Here we propose a tensor network based algorithm that can substantially extend the accessible system sizes compared to other existing methods, allowing us to study the unitary design phase transition numerically. We will elaborate on the algorithm and investigate its computational complexity in Section~\ref{sec:staircase}.

In Fig.~\ref{fig:design_transition}(a-d), we show a finite-size scaling analysis of the unitary $k$-design distance for $k=2,3, 4$ 
\begin{align}
    \Delta^{(k)}
    & =\underset{U \sim \mu_p }{\mathbb{E}}\left\Vert\sum_{\v{s}}p(\v{s}) 
    \mathcal{U}_{{\rm dec},\v{s}}^{\otimes k} 
    -\Phi^{(k)}_{\rm Haar}\right\Vert_\Diamond, \\
    \mathcal{U}_{{\rm dec},\v{s}}(\cdot) 
    & = U_{{\rm dec},\v{s}} [\cdot] U_{{\rm dec},\v{s}}^\dagger,
    \label{eq:designdist_diamond}
\end{align}
where `dec' stands for the choice of decoder; here we use the random decoder.
Data is shown as a function of $p$ (probability of having a nontrivial physical unitary on each qubit), with varying code distance $d$ from $7$ to $15$ (always odd). We compute the diamond norm by using the Qiskit package~\cite{qiskit2024}, where the diamond norm is formulated as a semidefinite program based on Ref.~\cite{watrous_simplersemidefiniteprogramscompletely_2012}. 
The existence of finite-size crossing points for $\Delta^{(k)}$ indicates a unitary design phase transition in the projected ensemble, controlled by $p$: 
for $p>p_{c}$, $\Delta^{(k)}$ decreases in the code distance $d$, while for $p<p_{c}$, $\Delta^{(k)}$ increases in the code distance $d$.  
The former behavior denotes convergence to a unitary $k$-design in the thermodynamic limit $d\to\infty$. (See Appendix~\ref{ap:distribution} for additional data on the distribution of logical unitaries in the two phases, which supports convergence toward the Haar distribution in the design phase.) 
We refer to the $p > p_{c}$ regime as a ``unitary design phase''. 

We use the scaling ansatz
\begin{align}\label{eq:ansatz}
    \Delta^{(k)}(p, d)=F_{k}\[(p-p_{c})d^{1/\nu}\],
\end{align}
to determine the location the critical point $p_c$ and correlation length critical exponent $\nu$. From the data collapse in Fig.~\ref{fig:design_transition}(b,d,f), we locate $p_{c}\approx0.87$ and $\nu\approx1.6$ for all $k=2$, $k=3$ and $k=4$, consistent with a single unitary design phase transition for different moments $k$. We conjecture that the same critical point should control the convergence of all moments $k$, thus giving the Haar distribution for $d\to\infty$ when $p > p_{c}$. We note that the cardinality of the projected ensemble ($2^{n-1}$, the number of possible syndrome outcomes) allows the formation of unitary designs up to moments $k \leq 2^{n/2}$~\cite{Brandao2016logical}.  

\begin{figure}
    \centering
    \includegraphics[width=1.0\linewidth]{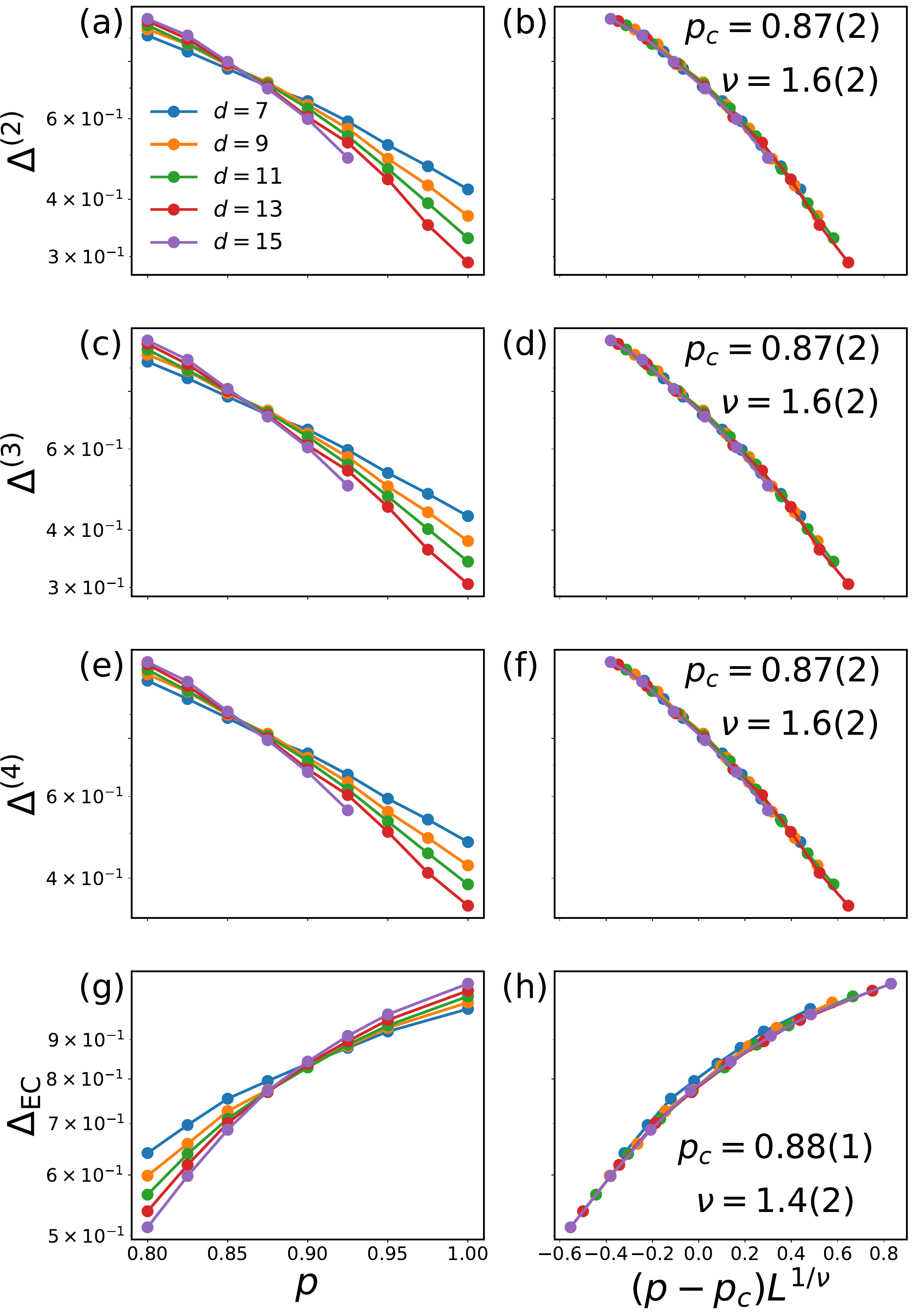}
    \caption{(a)(c)(e) Unitary $k$-design distance $\Delta^{(k)}$ for $k=2, 3, 4$ with the random decoder and (g) logical error rate $\Delta_{\rm EC}$  with the optimal decoder as a function of the probability of applying random single-qubit unitaries $p$. (b)(d)(f)(h) Corresponding scaling collapse of the data. The data are averaged over $384-1280$ realizations of random single-qubit unitaries.}
    \label{fig:design_transition}
\end{figure}

\subsection{Error correction threshold}
Using the same simulation method, we can also investigate the optimal error correction threshold via the logical error rate defined as
\begin{align}
    \Delta_{\rm EC} = \mathop{\mathbb{E}}_{U\sim \mu_p} \[\sum_{\v{s}}p(\v{s})\left\Vert \mathcal{U}_{{\rm opt},\v{s}}-\mathcal{I} \right\Vert_\Diamond\]
\end{align}
where $\mathcal{I}$ is the identity channel, and $\mu_p$ is the measure over the unitary group $U(2^n)$ induced by our prescription for sampling the physical gates $\{u_j\}$ ($u_j$ Haar-random on $U(2)$ with probability $p$, $u_j=\mathbb{I}$ otherwise). Note the difference with the design distance measures $\Delta^{(k)}$ [Eq.~\eqref{eq:designdist_diamond}]: here the diamond norm is inside the syndrome average, so $\Delta_{\rm EC}$ is small if and only if each logical gate $U_{L,\v s}$ is close to the identity.

The numerical results are presented in Fig.~\ref{fig:design_transition}(e,f). The data suggest the logical error rate $\Delta_{\rm EC}$ decreases in the code distance $d$ when $p<p_{c,\mathrm{EC}}$, where $p_{c,\mathrm{EC}}$ can be viewed as the coherent error threshold.  Similar to the unitary design, we locate the critical point $p_{c,\mathrm{EC}}$ and
correlation length exponent $\nu$ by finding a data collapse on the scaling ansatz
\begin{align}
    \Delta_{\rm EC}(p, d)=F\[(p-p_{c, \mathrm{EC}})d^{1/\nu}\].
\end{align}
The result is $p_{c,\mathrm{EC}}= 0.88(1)$ and $\nu=1.4(2)$, both of which fall in the tolerance interval of the results for the unitary design phase transition. 
This provides numerical evidence of a unitary design phase transition for $k\geq2$ that coincides with the optimal error correction threshold. 

As we argued above, the optimal error correction threshold poses a lower bound for the formation of unitary designs. Assuming that a unitary design phase exists, one could {\it a priori} have two distinct thresholds $p_{c,\rm EC} < p_{c}$ with an intervening `non-EC, non-design' phase. Our results indicate that this is not the case. In particular they suggest that the projected ensemble can flow toward two distinct fixed points as $d\to\infty$: the random Pauli ensemble\footnote{
This assumes a random decoder. With a specific decoder like MWPM we will have a trivial `identity' ensemble $\mathcal{E}_{\mathbb I} = \{1,\mathbb{I}\}$ below a decoder-dependent threshold $p < p_{c,{\rm dec}}$, as shown in Fig.~\ref{fig:main_idea}(b). 
} 
$\mathcal{E}_\text{Pauli} = \{1/4,\mathbb{I}; 1/4,X; 1/4,Y; 1/4,Z\}$ when $p<p_c$, and the Haar random ensemble $\mathcal{E}_{\rm Haar}$ when $p>p_c$. 

\subsection{Phase transitions under the MWPM decoder}

\begin{figure}
    \centering
    \includegraphics[width=1.0\linewidth]{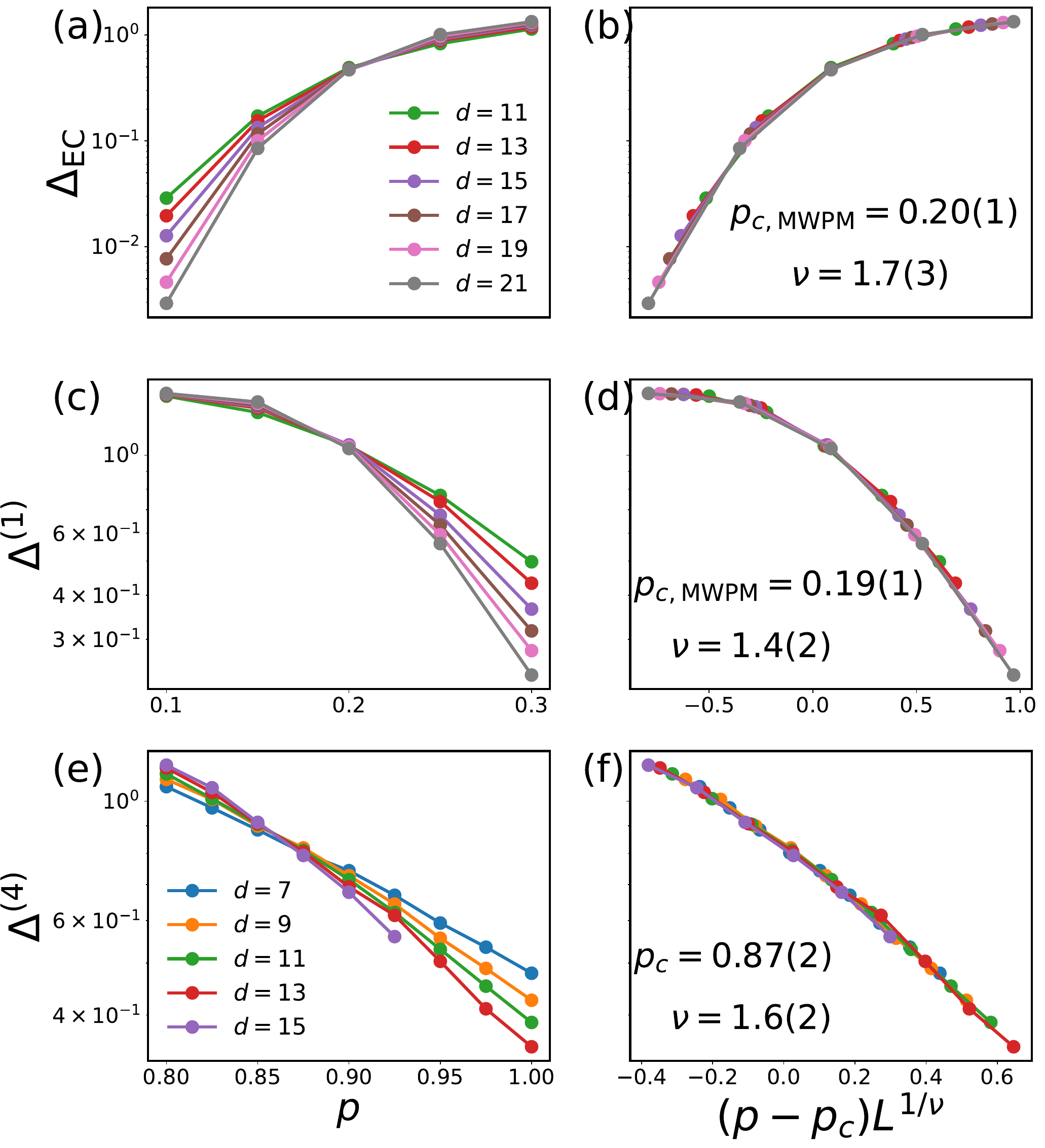}
    \caption{(a) (b) Logical error rate $\Delta_{\rm EC}$, (c)(d) $1$-design distance $\Delta^{(1)}$ and (e)(f) $4$-design distance $\Delta^{(4)}$  under the MWPM decoder as a function of the probability of applying random single-qubit unitaries $p$.  Scaling collapse of the data. The data are averaged over $384-1280$ realizations of random single-qubit unitaries.}
    \label{fig:mwmd}
\end{figure}

Finally we study the phase transitions in error correction and the formation of unitary $1$-design under a specific decoder (neither random nor optimal), such as MWPM. For MWPM decoder, one incurs logical Pauli errors due to the error degeneracy~\cite{stace2010error}, i.e. the ambiguity in the choice of matching. In this case, since the extra logical Pauli errors can form a logical 1-design (but not a 2-design or higher), the error correction threshold and $1$-design transition will be separated from the $(k\geq2)$-design transition, with $p_{c, \mathrm{MWPM}}\approx0.20$, much smaller than the $(k\geq 2)$-design threshold $p_c\approx0.87$, as shown in Fig.~\ref{fig:mwmd}. The wide discrepancy in thresholds between the MWPM decoder and the optimal decoder arises due to the different information available to the two decoders: our MWPM uses only syndrome data and no prior information on the noise, while the optimal decoder uses full knowledge of the configuration of coherent errors. 

\section{Staircase circuit mapping and computational phase transition}\label{sec:staircase}

In this section we present a classical algorithm to decode the surface code under general single-qubit coherent errors, which we used to simulate the projected ensemble of logical unitaries in Sec.~\ref{sec:emergent_unitary_design}. Notably, our algorithm undergoes a complexity phase transition at the same value of $p_c$ identified above as the optimal error correction and unitary design threshold, with efficient simulation below threshold and inefficient simulation above threshold. This complexity transition is due to a measurement induced entanglement transition in an effective $(1+1)$-dimensional monitored dynamics that appears in the algorithm (see Sec.~\ref{sec:review_sebd}). 
We stress that our algorithm is of independent interest and could be applied more broadly. It may also be possible to generalize it to the case of incoherent errors~\cite{behrends_surfacecode_2025}, which is an interesting direction to pursue in future work.

\begin{figure}
    \centering
    \includegraphics[width=1.0\linewidth]{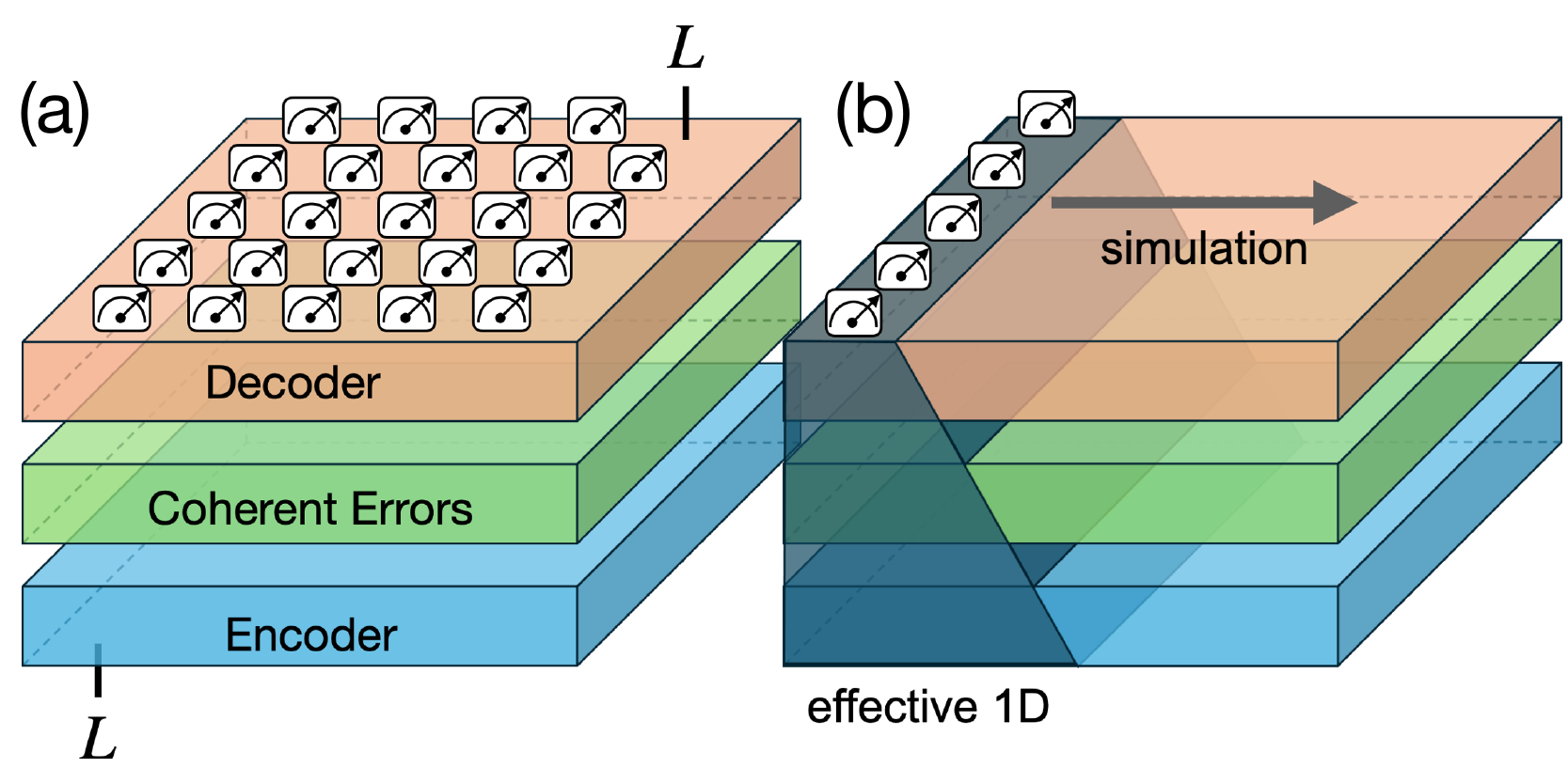}
    \caption{(a) Schematic of the encoding-decoding circuits with coherent errors. (b) The staircase representation enables the mapping of 2D circuits to effective 1D circuits along the spatial direction, similar to the space-evolving block decimation~\cite{Napp2022efficient,Cheng2023efficient}.}
    \label{fig:staircase}
\end{figure}

Ref.~\cite{suzuki_efficient_2017,Bravyi2018correcting} introduced an efficient simulation algorithm based on a mapping of the surface code with coherent $Z$-errors to a system of free Majorana fermions. This is key to the ability to efficiently decode the problem and thus successfully correct errors below threshold. In Sec.~\ref{sec:emergent_unitary_design} we have generalized the construction of unitary projected ensembles beyond $Z$ errors, allowing for arbitrary on-site unitaries as well as a wide class of many-body unitaries. In such cases the mapping to free Majorana fermions breaks down, and a new approach is needed.

Our algorithm is based on the ``staircase'' circuit representation of the encoder (decoder) of surface codes. Surface code states possess long-range (topological) entanglement, and so cannot be prepared by shallow local unitary circuits. Staircase circuits sequentially apply gates across the system, building up the long-range entanglement step by step, and achieve the optimal linear circuit depth lower bound~\cite{Satzinger2021realizing,Higgott2021optimallocalunitary}.
While in practice measurement-based circuits may have an advantage for preparing surface code states~\cite{iqbal_topological_2024,foss-feig_experimental_2023}, at the level of classical simulation it will prove useful to focus on staircase encoder circuits. Similarly, staircase decoder circuits can be used to turn the two- and four-qubit measurements of surface code stabilizers into single-qubit measurements. 
The sequential nature of staircase encoder/decoder circuits, combined with the repeated measurement of individual qubits, allows us to effectively trade a space dimension for a time dimension. This turns the 2D problem with final measurements into a $(1+1)$D dynamics with mid-circuit measurements, Fig.~\ref{fig:staircase}. This general strategy is briefly reviewed in Sec.~\ref{sec:review_sebd} and is the basis of a simulation algorithm called `space-evolving block decimation' (SEBD)~\cite{Napp2022efficient,Cheng2023efficient}.

\subsection{Warm-up: repetition code}

Before diving into the algorithm for surface codes, we first consider the repetition code as a toy example. 
The encoder circuit $E$ of the repetition code is a sequence of CNOT gates arranged in a staircase pattern. 
The first input qubit contains the state to be encoded, $\ket{\psi}$, while the other $n-1$ qubtis store a syndrome string $\v s\in \{0,1\}^{n-1}$; $E$ acts on these input states as
\begin{equation}
    E [(\alpha \ket{0} + \beta\ket{1}) \otimes \ket{\v s}] = \alpha \ket{0,\mathbf{s}}+ \beta \ket{1,\bar{\mathbf s}}
\end{equation}
where $\bar{\mathbf s}$ is the bit-wise negation of bitstring $\mathbf s$. In words, the output state is a ``Schr\"odinger's cat'' of two bitstrings with bond checks $g_i = Z_i Z_{i+1} = (-1)^{s_i}$. The trivial syndrome $\v s = \v 0$ yields repetition code logical states, $\alpha \ket{0}^{\otimes n} + \beta\ket{1}^{\otimes 1} = \ket{\psi_L}$.
It follows that, when we eventually want to measure the parity checks $g_i = Z_i Z_{i+1}$ after acting with a physical unitary $U$, we may instead apply the decoder circuit $E^\dagger$ followed by computational basis measurements on qubits $i=2,\dots n$, which after application of $E^\dagger$ store the syndrome bits. In all, the protocol is given by $(\mathbb{I} \otimes \ketbra{\v s}) E^\dagger U E (\ket{\psi}\otimes\ket{0}^{\otimes n-1})$, where $\mathbb{I} \otimes \ketbra{\v s}$ projects the syndrome qubits $i=2,\dots n$ onto the syndrome outcome and acts trivially on the logical qubit $i=1$. This describes a left-to-right sweep of CNOT gates ($E$), followed by single-qubit gates ($U$) and then a right-to-left sweep of CNOT gates ($E^\dagger$) and measurements.

It is advantageous to write the whole protocol as a staircase circuit. We can do so by choosing a different decoder circuit, $D = \tilde{E}^\dagger$, where $\tilde{E}$ is simply the mirror reflection of $E$. This way both $E$ and $D$ proceed left-to-right. The whole protocol $(\ketbra{\v s}\otimes I) D U E (\ket{\psi}\otimes \ket{\v 0})$ is shown diagrammatically in the top left panel of Fig.~\ref{fig:encoder}. It takes in a logical input state in the leftmost qubit, $i=1$, and returns a logical output state on the rightmost qubit, $i=n$. The tensor network connecting input to output may be seen as the iteration of a unit cell made of only three qubits. At the end of each unit cell, a qubit is measured out and a new auxiliary qubit in the $\ket{0}$ state is introduced, keeping the number of qubits constant and finite. This way we have effectively turned the simulation of a 1D repetition code onto a $(0+1)$D monitored evolution. 

\subsection{Surface code}

\begin{figure*}
    \centering
    \includegraphics[width=1.0\linewidth]{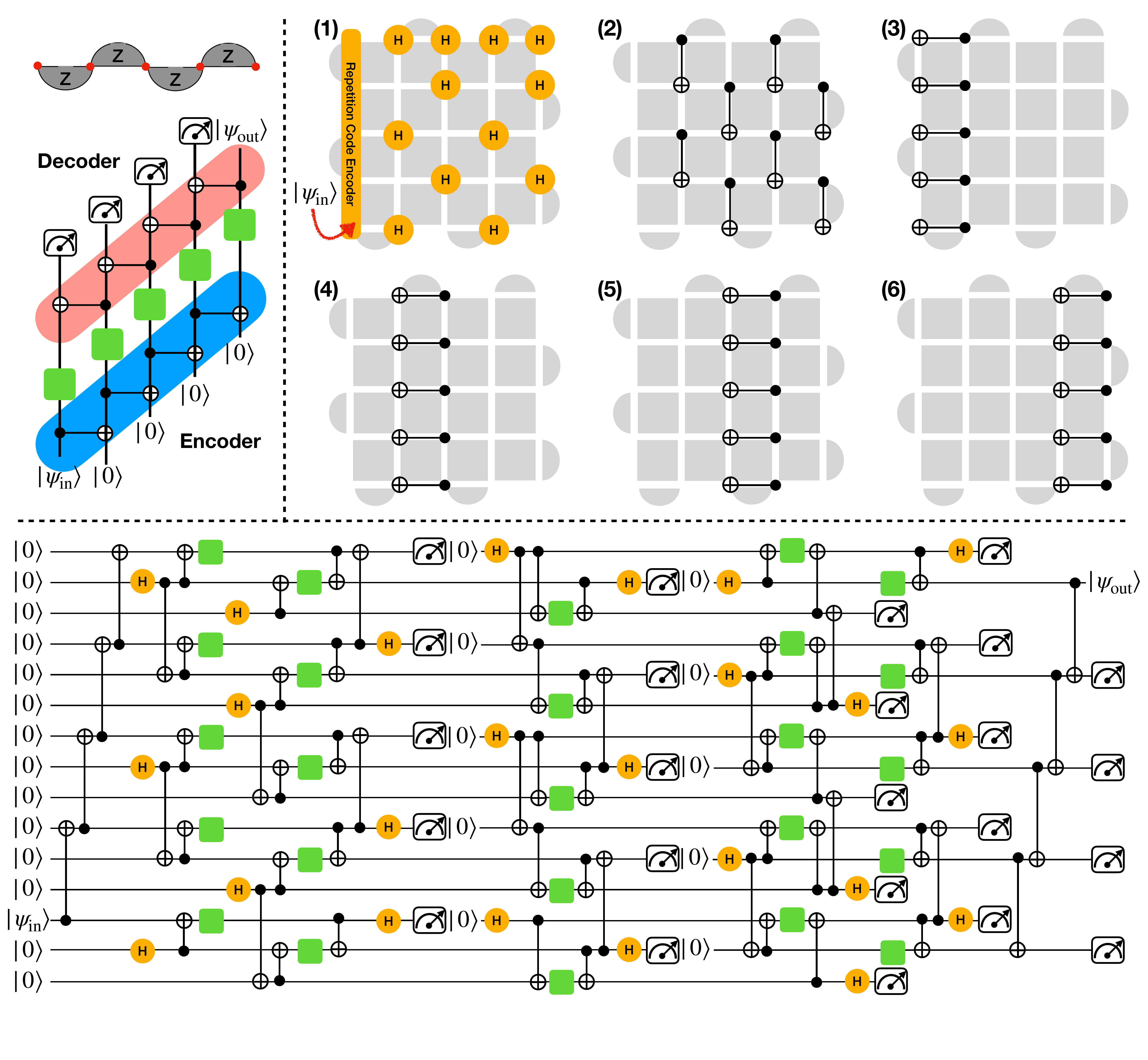}
    \caption{Top left panel: repetition code with $d=5$ and its staircase encoding(blue)-decoding(red) circuit with on-site coherent errors(green). Top right panel: steps of the staircase encoder for $d=5$ surface code. Step (1) includes a depth-$d$ staircase circuit (repetition code encoder) while all the other steps have depth $1$. Steps (3)--(6) constitute a staircase of CNOT gates along the horizontal direction, which has total depth $d-1$. Bottom panel: effective (1+1)D monitored circuit on $3\times5$ qubits for the encoding-decoding circuit of the $d=5$ surface code.}
    \label{fig:encoder}
\end{figure*}

The surface code, as a hypergraph product of two repetition codes~\cite{Tillich2014quantum}, also features a staircase structure. In the top right panel of Fig.~\ref{fig:encoder}, we explicitly show a circuit to encode a $d=5$ surface code with a staircase of CNOT gates along the horizontal direction. The first step is to encode the input qubit (bottom of the top left panel in Fig.~\ref{fig:encoder}) into a 1D repetition code along an edge, with a 1D staircase circuit that sweeps the vertical direction, bottom-to-top in this case. Then a 2D staircase circuit sweeping the system left-to-right turns this 1D repetition code logical state into a 2D surface code logical state. The circuit as a whole has linear depth. 
In analogy to the repetition code, we can choose a decoder circuit $D = \tilde{E}^\dagger$ where $\tilde{E}$ is obtained from $E$ by a spatial inversion, so that the total encoding-decoding circuit takes a staircase form.

Having recast the whole protocol as a staircase circuit followed by one-qubit measurements, we can now view it as a $(1+1)$D monitored evolution, in analogy with the repetition code but with an additional transverse spatial dimension. This is a version of a `holographic' or `space-time dual' mapping, which leverage the equivalence between space and time in the presence of measurements~\cite{foss-feig_holographic_2021,ippoliti_fractal_2022,lu_spacetime_2021,hoke_measurement-induced_2023,anand_holographic_2023} to trade a space dimension for a time dimension (see Fig.~\ref{fig:staircase}(b)). The idea is similar to the space-evolving block decimation method for sampling 2D shallow circuits~\cite{Napp2022efficient,Cheng2023efficient}, see Sec.~\ref{sec:review_sebd}.

As a result of this mapping we have an effective 1D description of the problem, which paves the way for matrix product state (MPS) algorithms. The relevant MPS has $N=3d$ qubits (corresponding to a $3\times d$ strip of qubits, with the factor of $3$ coming from the geometry of the past lightcone in Fig.~\ref{fig:staircase}(b)). See bottom panel of Fig.~\ref{fig:encoder} for an explicit example for $d=5$ surface code. Notably, the final (syndrome) measurements are converted to midcircuit measurements and resets in the dynamics of the MPS. This is especially interesting since mid-circuit measurements can lower the state's entanglement entropy and even drive a transition to an area-law-entangled phase (see Sec.~\ref{sec:review_sebd}). The accuracy of the MPS representation relies on the entanglement area law~\cite{Schuch2008entropy}. 

Similar to the random circuit sampling problem studied in Ref.~\cite{Napp2022efficient,Cheng2023efficient}, the unitary gates comprising encoder $E$, ``coherent errors'' $U$, and decoder $D$ are competing with the syndrome measurements; this raises the possibility of an entanglement phase transition. Since the number of measurements is fixed by the circuit's geometry, the competition is driven by the strength of coherent errors---the same parameter that, as we saw, drives the unitary design and error correction transition. It is natural to conjecture an entanglement phase transition in this problem, and thus a complexity transition for the associated classical algorithm, at the same threshold $p_c$ of coherent errors.
In the following we investigate this question both numerically and analytically.

\subsection{Entanglement phase transition \label{sec:mipt}}

In this section we numerically study the entanglement phase transition in the effective 1D system that is used in the above algorithm (Fig.~\ref{fig:encoder}, bottom). The model we consider is the same as in Section~\ref{ssec:design_transition}, where coherent errors are on-site Haar random unitaries with density $p$. Due to the structured geometry of this circuit, the tripartite mutual information (the clearest and most direct diagnostic of the entanglement phase transition in one dimension~\cite{Zabalo2020critical}) suffers from large finite-size effects. 
Therefore, we resort to the coding perspective on the measurement-induced entanglement phase transition~\cite{Gullans2020dynamical,Gullans2020scalable}: we ask whether encoded information can be destroyed by measurements. Concretely one can consider the behavior of a reference qubit $R$ initially entangled with the monitored system~\cite{Gullans2020scalable}. Let $S_R(t)$ be the entanglement entropy of the reference qubit as a function time $t$ (in our case in the depth of the monitored circuit). At late time one generally has $S_R(t)\sim \exp({-t/\tau})$, with a purification time $\tau$. The behavior of $\tau$ changes sharply at the transition: from $\tau\sim\exp(n)$ in the volume-law phase to $\tau\sim O(1)$ in the area-law phase. At the critical point, $\tau$ diverges algebraically as $\tau\sim L^{z}$ with $L$ the linear size of the system and $z$ a dynamical critical exponent, which is usually $1$~\cite{Zabalo2020critical}.
Therefore, in monitored random circuits, one can use $\tau/L$ as a diagnostic for the entanglement phase transition.

However, the general picture we just reviewed becomes more subtle in our case. This is due to the fact that our monitored circuits have a special structure, being formed by encoder and decoder circuits of the surface code: the initial state naturally factors into a logical part and a syndrome part. %If the reference qubit is entangled with the syndrome degrees of freedom, it will be purified once the corresponding syndrome is manifestly measured, which is typically after $O(1)$ time steps. On the other hand, 
If the reference qubit is entangled with the code's logical qubit, its entanglement with the system will be persistently maximal (one bit), since based on Theorem~\ref{theo:logicalunitary}, the syndrome measurements result in logical unitaries and thus never read out the logical information.

To circumvent this issue, we instead consider entangling the reference qubit to logical degrees of freedom, then running the monitored evolution over time and eventually measuring out the logical qubit while leaving some syndrome degrees of freedom unmeasured. 
This setup can be naturally constructed in the MPS representation of the encoding-decoding circuit. We encode a $(d_x+1)\times d_y$ surface code\footnote{Here $d_x$, $d_y$ refer to the linear dimensions of the surface code.}, then treat the first $d_x\times d_y$ qubits (leaving out the last column) as if they formed a surface code patch. We apply coherent errors on this $d_x\times d_y$ surface code, decode it, and measure out the corresponding logical qubit (See Fig.~\ref{fig:GHZ}(a)). The remaining column of $d_y$ qubits retains some syndrome information from the larger [$(d_x+1)\times d_y$] surface code, and thus can remain entangled with the reference qubit (See Fig.~\ref{fig:staircase}(b)). 
This circuit is closely related to the first $d_x$ time steps in the $(1+1)$D monitored circuit. It differs only by the application of a repetition code decoder and a logical measurement; these are artificial extra operations that do not alter the phase transition (See Appendix~\ref{ap:clifford}) but have the advantage of offering an analytical understanding of the connection between unitary design phase transition and entanglement phase transition (See Section~\ref{sec:connection}). 

\begin{figure}
    \centering
    \includegraphics[width=1.0\linewidth]{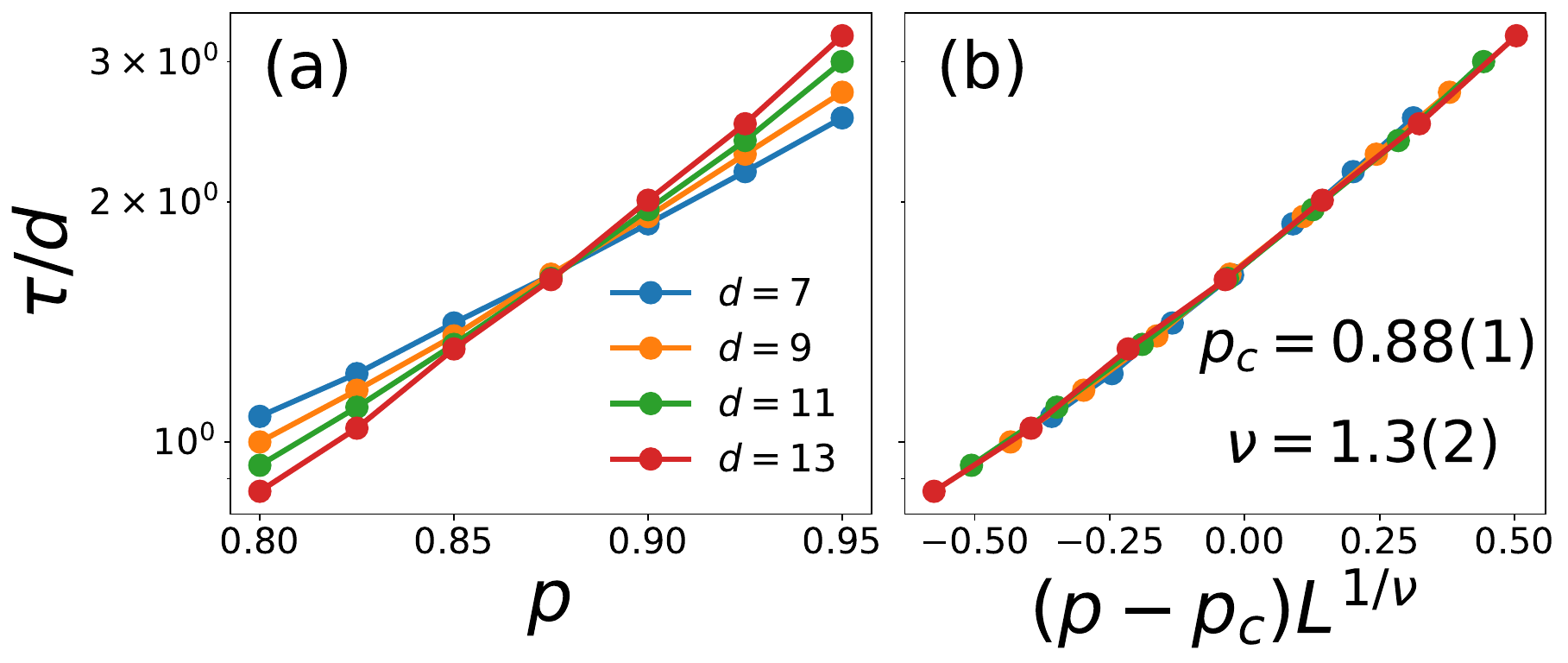}
    \caption{(a) Purification time scale $\tau$  as a function of the probability $p$ of random single-qubit physical unitaries. (b) Scaling collapse of the data. The data are averaged over $100-1000$ realizations of both the random coherent errors and syndrome sampling.}
    \label{fig:purification}
\end{figure}

This construction allows us to study the purification of the reference qubit as we sweep across the staircase along one direction---here we choose the horizontal ($x$) direction. Setting $d_y=d$ and viewing $d_x$ as a variable time scale, the averaged entropy of the reference qubit takes the form $S_R(d_x)\sim\exp(-d_x/\tau)$.

In Fig.~\ref{fig:purification}, we show a finite-size scaling analysis of $\tau/d$ for model considered in Section~\ref{sec:emergent_unitary_design}, obtained from the MPS simulation of the spacewise dynamics up to $d_x=3d$ (only $d\leq d_x\leq 3d$ are used in the fit to avoid early-time transient effects). The existence of a finite-size crossing point of $\tau/d$ in Fig.~\ref{fig:purification}(a) indicates $z=1$ ($\tau \propto d^1$ at criticality). Therefore, we can locate the critical point $p_c$ and correlation length exponent length $\nu$ by using a similar scaling ansatz in Eq.~\ref{eq:ansatz}. From the data collapse in Fig.~\ref{fig:purification}(b), we have $p_c=0.88(1)$ and $\nu=1.3(2)$, which coincide with the critical point of the unitary design phase transition within tolerance. This indicates the a deeper connection between the unitary design phase transition and the entanglement phase transition.

In Appendix~\ref{ap:clifford} we numerically verify that the purification transition of $S_R$ coincides with the entanglement transition in the 1D MPS in Clifford circuits (replacing Haar random single qubit coherent errors by random Clifford coherent errors). 
There, by using stabilizer simulation, we can compute the tripartite mutual information $I_3$ in large codes ($d> 100$) and verify that the purification of a reference qubit does signal the entanglement phase transition.

\subsection{Connection between unitary design phase transition and entanglement phase transition}\label{sec:connection}

\begin{figure}
    \centering
    \includegraphics[width=1.0\linewidth]{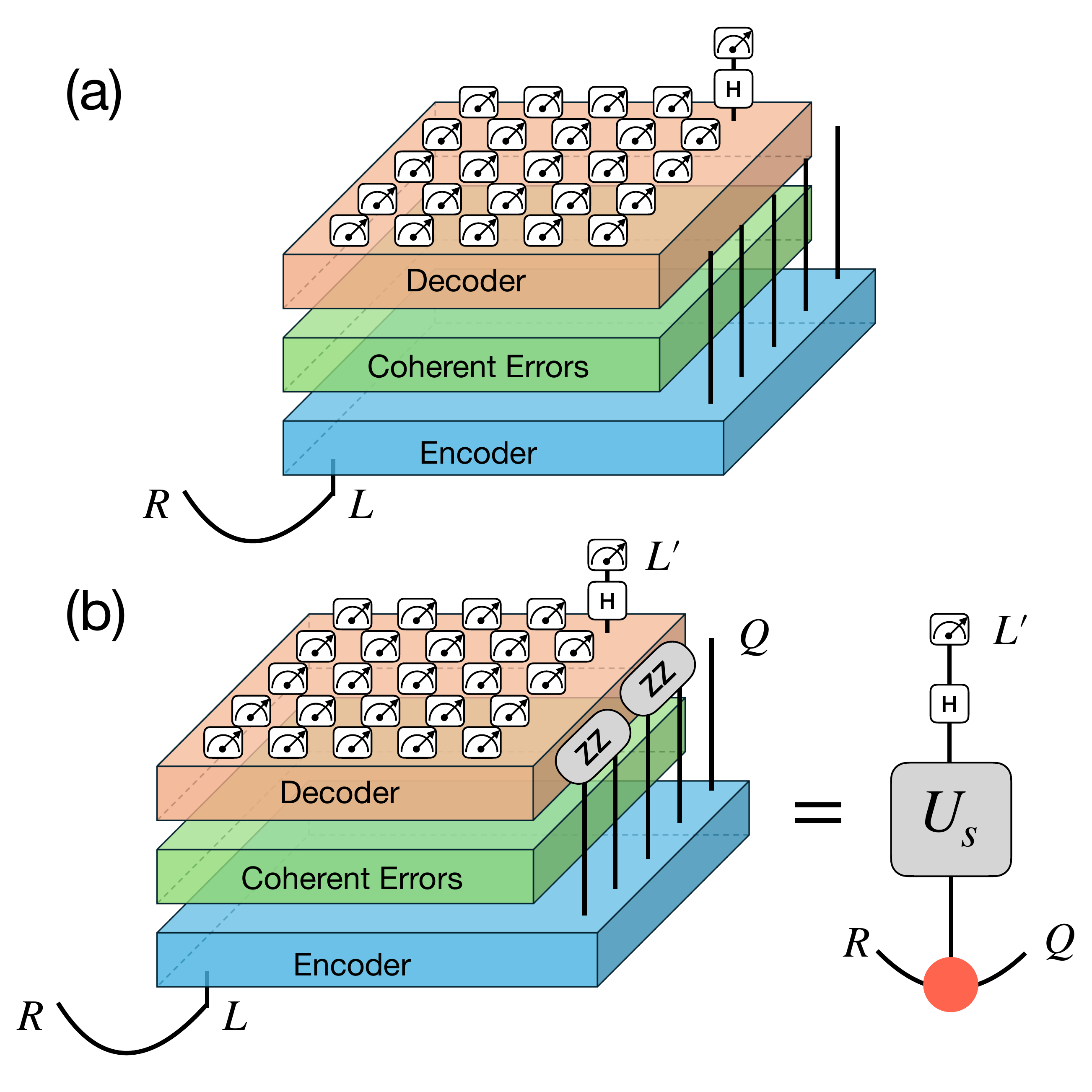}
    \caption{(a) Schematic of the circuit used in Sec.~\ref{sec:mipt} to study dynamical purification of a reference qubit. The reference qubit $R$ is entangled with the logical qubit $L$ of a $(d_x+1)\times d_y$ surface code, then coherent errors and syndrome measurements are applied only to a $d_x\times d_y$ surface code patch, leaving out a column of qubits on the right. We also measure the logical qubit $L'$ of the $d_x\times d_y$ surface code. 
    (b) Mapping of the circuit to a three-qubit GHZ state. Pairwise $ZZ$ measurements on the qubits in the rightmost column leave out a single qubit $Q$. $R$, $L'$ and $Q$ form a GHZ state. The coherent errors and syndrome measurements result in a unitary $U_{L,\v s}$ on $L'$ prior to its measurement. %A nontrivial $U_{L,\v s}$ allows $R$ and $Q$ to remain entangled after $L'$ is measured. 
    }
    \label{fig:GHZ}
\end{figure}

So far we have made a numerical observation that the entanglement phase transition in our $(1+1)$D monitored dynamics associated to the surface code simulation has the same critical point $p_c$ as the optimal error correction and unitary design thresholds uncovered in Sec.~\ref{sec:emergent_unitary_design}. We now derive an analytical understanding of this connection. 

The key technical observation is a mapping of the many-body circuit of Fig.~\ref{fig:GHZ}(a) onto a three-qubit one, where the first qubit represents the reference $R$, the second is the logical qubit $L'$ (which is being measured in our setup), and the third, $Q$, represents the unmeasured syndrome degrees of freedom. The three qubits are in an $X$-basis GHZ state, $\ket{+}^{\otimes 3} + \ket{-}^{\otimes 3}$, see Fig.~\ref{fig:GHZ}(b); the $L'$ qubit is being acted upon by the logical gate $U_{L,\v{s}}$ and then measured in the $X$ basis. 
Clearly if the gate is trivial, $U_{L,\v s}\approx \mathbb{I}$, then the GHZ state disentangles into $\ket{\pm }_R\ket{\pm}_Q$ after the measurement, and thus the reference purifies. On the other hand if the gate is far from the identity, it effectively rotates the measurement away from the $X$ basis and allows for some entanglement between $R$ and $Q$ to survive. This draws a precise connection between the projected ensemble $\mathcal{E}_{U,L} = \{U_{L,\v s}\}$ and the entanglement of the reference qubit, and draws an analytical connection between the two transitions.

Next we explain the mapping in detail. We return to the setup discussed in Sec.~\ref{sec:mipt}, where we have a reference qubit entangled with the syndrome degrees of freedom of a $(d_x+1)\times d_y$ surface code, and we aim to separate that into a $d_x\times d_y$ surface code patch and a $1\times d_y$ repetition code. 
It can be shown (see Appendix~\ref{ap:decomposition}) that, by performing two-body $ZZ$ measurements on all but one of the qubits in the last column (See Fig.~\ref{fig:GHZ}(b), recall $d_y$ is odd), one indeed obtains a $d_x\times d_y$ surface code and a $d_y$-repetition code. Calling the repetition code logical qubit $Q$ and the $d_x\times d_y$ surface code logical qubit $L'$, one can show that the reference $R$ forms a GHZ state with $Q$ and $L'$ at this stage: $\frac{1}{\sqrt 2} (\ket{+}_R \ket{+}_{L'} \ket{+}_Q + \ket{-}_R \ket{-}_{L'} \ket{-}_Q)$.  
At this point we apply coherent errors to the $d_x\times d_y$ surface code, followed by decoding and measurement of the syndrome qubits; this acts on logical qubit $L'$ as a unitary drawn from the appropriate projected ensemble $\mathcal{E}_{L,U}$. 
Finally, we measure out $L'$ in the $X$ basis. This concludes the mapping illustrated in Fig.~\ref{fig:GHZ}(b).

We note that, although the reference qubit's entanglement in Fig.~\ref{fig:GHZ}(a) will generally be altered by the $ZZ$ measurements in Fig.~\ref{fig:GHZ}(b), the qualitative behavior across the phase transition remains the same.

From Section~\ref{sec:emergent_unitary_design}, we have the projected ensemble of logical unitaries flows to the random Pauli ensemble $\mathcal{E}_\text{Pauli}$ (or the identity ensemble $\mathcal{E}_{\mathbb I}$ depending on the decoder) in the thermodynamic limit $d\rightarrow\infty$ for $p<p_c$. In this case the logical qubit $L'$ is being measured in the $X$ basis, regardless of the possible applied Pauli unitary; this measurement disentangles the $X$-basis GHZ state and thus purifies the reference qubit.  
When $p>p_c$, the projected ensemble flows instead to $\mathcal{E}_{\rm Haar}$, and application of a random unitary on $L'$ effectively rotates the measurement away from the $X$-basis, leaving some entanglement between $R$ and $Q$. This can be quantified by the average purity of the reference qubit:
\begin{align}
    \langle \mathcal{P} \rangle_R=\frac{2}{3}
\end{align}
as can be obtained by Weingarten calculus. 

This argument can be adapted to the setting of Fig.~\ref{fig:GHZ}(a), where we do not apply the extra $ZZ$ measurements on the final column of qubits. The GHZ state on $RL'Q$ can be replaced by a superposition over different $ZZ$ measurement outcomes,
\begin{align}\label{eq:RLQM}
    |\psi_{\v{s},l}\rangle_{RQM} 
    = \sum_{\v{m}} \sqrt{2p_{\v{m}|\v{s}}} \langle l_{L'}|U_{\v{m},\v{s}}|{\rm GHZ}\rangle_{RL'Q} |\v m\rangle_M,
\end{align}
where $l$, $\v s$, and $\v m$ are the measurement outcomes of $L'$, syndrome, and $ZZ$ operators respectively ($\v m\in\{0,1\}^{(d_y-1)/2}$), we introduced a label $M$ for the latter subsystem, and we used the fact that $p_l=\frac{1}{2}$. 
In this case, when $p<p_c$, the reference qubit is purified on average in the thermodynamic limit for the same reason. 
For $p>p_c$, we show in Appendix~\ref{ap:upperbound} that the averaged purity in the reference qubit satisfies
\begin{align}\label{eq:upperbound}
    \langle \mathcal{P} \rangle_R\leq\frac{2}{3}.
\end{align}
Intuitively, the $ZZ$ measurements on average tend to purify $R$, so if $R$ remains entangled to $Q$, then it was likely to be entangled to $QM$ before $M$ was measured. 

Under this perspective,  whether the reference qubit is entangled with the rest of the system can be directly related to whether the projected ensemble flows to the identity/Pauli or to the Haar distribution. Note that the entanglement/purification phase transition is independent of the decoder, thus generally coincides with the $k\geq2$-design transition.

\section{Discussion} \label{sec:discussion}

\subsection{Summary}

We have introduced a protocol to apply random unitary gates directly on encoded qubits by leveraging the intrinsic randomness of measurement in quantum mechanics. 
Our protocol is based on the application of unitary transformations on the physical qubits followed by syndrome measurements and correction. Each of these steps can be implemented efficiently on quantum processors, and requires no auxiliary qubits (unlike e.g. magic state distillation). 

We have provided criteria on the quantum error correcting code and on the physical unitary transformation (Theorem~\ref{theo:logicalunitary}) that ensure unitarity of the random logical gates, significantly generalizing previously known results and allowing the realization of arbitrary logical gates. 
We have then investigated, both numerically and analytically, the emergence of unitary designs in this protocol, uncovering a sharp threshold in the strength $p$ of the applied physical unitaries. When $p>p_c$, we have a ``unitary design phase'' where, upon taking the code distance $d\to\infty$, the distribution of logical unitaries approaches the Haar random distribution; when $p < p_c$, all gates in the distribution converge toward the identity, which can be seen as successful correction of coherent errors, or to random Pauli unitaries, which correspond to unsuccessful error correction due to a suboptimal decoder.
Finally, we have presented a matrix product state algorithm to simulate this protocol, leveraging ``staircase'' encoder and decoder circuits for the surface code, by which we map the original 2D problem onto a $(1+1)$D monitored dynamics. The same threshold $p_c$ that controls the optimal error correction threshold and unitary design phases also emerges as an entanglement phase transition in this effective monitored dynamics, and thus gives a transition in computational complexity for our MPS algorithm.

\subsection{Outlook}

Our results open a number of directions for future investigation and pave the way to several applications. We outline some of them next. 

\subsubsection{Applications of logical unitary designs}

Unitary designs have many applications in quantum information science. Our work presents an efficient route to port these applications to the setting of encoded logical qubits, which is especially interesting as we enter the era of error-corrected quantum processors.

First, we note that some applications such as ``Pauli-twirling''~\cite{dur_standard_2005,cai_constructing_2019} only require a unitary $k$-design for $k=1$. On encoded qubits, that can be achieved directly by transversal Pauli unitaries.  In these cases the higher level of randomness afforded by our protocol is unnecessary. 
Further, many protocols~\cite{knill_randomized_2008,dankert_exact_2009,huang_predicting_2020,elben_randomized_2023} require a unitary $k$-design with $k=2$ or 3, which is realized by the Clifford group. Some error correcting codes where our protocol is applicable, such as the odd-distance color code (on the 2D honeycomb lattice), also admit a transversal implementation of the Hadamard and phase gate, which together generate the Clifford group and thus offer a straightforward route to transversal 3-designs.

Our protocol becomes useful for applications requiring $k=2$ or 3-designs on codes that do not admit a transversal implementation of the Clifford group, such as the surface code~\cite{brown_poking_2017}; or for any any applications requiring a 4-design or higher, which entails non-Clifford (or ``magic'') gates. 
Below we list some such applications.

{\bf Classical shadow tomography} is an efficient protocol to learn many properties of quantum states via randomized measurements. One requires a 2-design to get correct results on average, and a 3-design to bound statistical fluctuations~\cite{huang_predicting_2020,elben_randomized_2023}. Classical shadow tomography is usually formulated for physical qubits, but it is natural to envision applications on fully- or partially-error corrected quantum processors where one would need randomized {\it logical} measurements. On the surface code these could not be implemented transversally, so our method could prove advantageous. A key aspect of the protocol is its division into a ``data acquisition'' phase, where random unitary gates drawn from a 3-design are applied to the processor before measurement, and a separate ``classical postprocessing'' phase where the same unitaries are simulated on a classical computer. The feasibility of that simulation is usually guaranteed by the use of Clifford gates. In our case we instead rely on the MPS algorithm presented in Sec.~\ref{sec:staircase}. This highlights a trade-off between randomness and computational complexity (the MPS algorithm becomes asymptotically inefficient above threshold, in the unitary design phase).

{\bf Randomized benchmarking} is a family of practical approaches to estimate the fidelity of unitary gates on a noisy quantum computer~\cite{knill_randomized_2008}. It is based on the application of a long sequence of random gates drawn from a 2-design, followed by a measurement. Comparison of the actual mesurement statistics with the ideal one expected under noiseless gates yields an accurate estimate of the fidelity. With our protocol one could implement randomized benchmarking directly at the logical level. This is interesting in practice since logical gates comprise many physical gates, so learning their fidelity or error models is more challenging. 
Additionally, {\it higher-order randomized benchmarking} was proposed in Ref.~\cite{nakata_quantum_2021} as a generalization of the aforementioned protocol that makes use of $k$-designs with $k > 2$ to learn more detailed features of the noise model beyond the fidelity, for example whether or not the noise channel is self-adjoint. The application proposed in Ref.~\cite{nakata_quantum_2021} requires exact rather than approximate designs, but we note that in the design phase the error $\epsilon$ is suppressed exponentially in the distance $d$, so issues of approximation can likely be avoided by simply scaling up. Finally we note that, since our protocol implements logical unitaries nondeterministically, it is not directly compatible with standard randomized benchmarking schemes, which rely on deterministic inversion of gates. Instead, our protocol is suitable for inverse-free randomized benchmarking frameworks, such as binary randomized benchmarking~\cite{Hines_fully_2024} and cross-entropy benchmarking~\cite{Boixo_characterizing_2018,arute_quantum_2019}, where inversion gates are replaced by classical postprocessing based on the known gate configuration and measurement outcomes.

{\bf Quantum cryptography applications} have been a major driver for the introduction of unitary designs~\cite{ambainis_quantum_2007,lancien_weak_2020}. Our approach could thus facilitate cryptographic protocols on encoded qubits. Furthermore, since our protocol is not restricted to a specific value of $k$, it would be interesting to investigate the generation of pseudorandom unitaries~\cite{schuster_random_2024}. These are unitary ensembles that are computationally indistinguishable from the Haar ensemble even with access to many copies, and are of great interest in quantum cryptography~\cite{ji_pseudorandom_2018}. 

{\bf Random circuit sampling experiments} are a key benchmark of computational supremacy in present-day quantum computers~\cite{arute_quantum_2019} and crucially require high levels of randomness that goes beyond the Clifford group (3-design). Even non-Clifford circuits may be classically simulated by so-called Clifford perturbation theory~\cite{begusic_real-time_2024} (or Pauli path~\cite{gonzalez-garcia_pauli_2024}) methods, if the gates are insufficiently far from Clifford. To remove any loopholes that may be exploited by classical algorithms, it is imperative to build circuits with gates that are as close as possible to genuinely random.
Our protocol could find application in next-generation random circuit sampling experiments on partially error-corrected computers. Circuits that combine our highly random single-qubit logical gates with transversal Clifford gates on two logical qubits (e.g. iSWAP, which scrambles very quickly~\cite{mi_information_2021}) appear especially promising.

\subsubsection{Effect of noise}

In this work we have focused on the ideal setting where unitary control and measurements are perfect. Where we discussed an error correction threshold, it was in reference to ``coherent errors'' (the physical unitaries $\{u_j\}$) that are {\it applied intentionally} and known by the experimentalist. 
The performance of this protocol in the presence of real errors is a natural and crucial question. While a thorough investigation is left to future work, our present results already provide some insight on this. 

The ``unitary design phase'' is defined by the strong randomness in the post-measurement ensemble of unitaries $\{U_{L,\v s}\}$. It is reasonable to expect in that phase that additional (uncontrolled) coherent errors would cause significant damage. The application of unitary designs, as outlined above, requires a classical post-processing step where data from the noisy experiment is cross-correlated with a simulation of the protocol. If we aim to implement physical unitaries $\{u_j\}$ but, due e.g. to systematic over- or under-rotation, we implement some other set of gates $\{u'_j\}$, this will lead to a discrepancy between the real and simulated projected ensembles, $\{U_{L,\v s}^{\rm (exp.)} \}$ and $\{U_{L,\v s}^{\rm (sim.)} \}$. 
Below threshold both ensembles should flow to the trivial `identity ensemble' $\mathcal{E}_{\mathbb{I}}$, thus there is no discrepancy---a signature of the error-correcting phase. But above threshold one expects a strong sensitivity to the choice of $\{u_j\}$ and thus a significant difference between experimental and simulated ensembles. 

The analysis of incoherent errors is more complex (as it involves a projected ensemble of {\it channels}) but one can expect similar behavior. Measurement error is the simplest to analyze, as it just leads to a permutation of the labels $\v s$ in the projected ensemble; again we would expect this to cause significant logical errors above threshold. 

Based on these arguments one should not expect the protocol as stated here to be fault-tolerant above threshold. There is thus a trade-off between the amount of randomness in the generated unitary ensemble and the fidelity of the operation (and also the classical computational complexity, see Sec.~\ref{sec:staircase}). It is interesting to conjecture that operation at or near the threshold, with multiple repeated rounds of the protocol, might give best results in practice and possibly even fault-tolerance. These questions are left to future investigation. 

Another important consequence of noise is on the classical simulation of the protocol (necessary for any application of the random logical gates). 
In the absence of noise, the ensemble of logical gates is independent of how the code is implemented, so in our simulation we are free to use the staircase encoder/decoder circuits which enable our mapping to one dimension. 
With noise, the specific encoding and syndrome measurement circuits would affect error propagation and would thus need to be included in the simulation, limiting the applicability of our approach.
However, to a first approximation, state preparation errors can be treated as channels acting on the input logical state, while measurement errors can be treated as bit flips on the syndrome labels (permutations of the ensemble). 
In the middle of the protocol, coherent errors can be included straightforwardly. Incoherent errors on the other hand are more challenging, since our MPS method relies on area-law entanglement scaling in pure states. While methods based on matrix product operator representations of the density matrix exist~\cite{darmawan_tnsurfacecode_2017,darmawan_optimal_2024}, there is no guarantee of an area-law scaling of (operator) entanglement in that setting. The possibility of leveraging entanglement transitions under incoherent noise remains an outstanding question. 

\subsubsection{Extension to general LDPC codes and multiple logical qubits}

Our Theorem~\ref{theo:logicalunitary} establishes the unitarity of the projected ensemble for an important set of codes: CSS codes with one logical qubit, odd distance, and whose stabilizers have even Pauli weight. These include the surface code and color code which are practically very relevant to near-term architectures, particularly superconductig qubits where spatial locality is required. 
It would be interesting to extend our results to more general low-density particy check (LDPC) codes~\cite{breuckmann_ldpc_2021}. These are codes with parity checks that are few-body, but not necessarily geometrically local. Relaxing geometric locality can be beneficial in terms of scaling of code rate and distance~\cite{bravyi_no-go_2009}, and is practically relevant to platforms such as trapped ions and neutral atom tweezer arrays. 

A challenge lies in the fact that our Theorem~\ref{theo:logicalunitary} applies only to codes with one logical qubit ($k=1$), whereas a key motivation for moving to general LDPC codes is the larger number of logical qubits, potentially even finite encoding rate ($k\propto n$ as $n\to\infty$). It would be interesting to identify variations of our protocol that can give logical unitaries on codes with $k>1$. Product constructions of LDPC codes~\cite{rakovszky_physics_2024} could be a promising route toward this goal, since our understanding of the protocol for repetition and surface codes may be leveraged as a building block to address more complex LDPC codes.

{\it Note added.} During completion of this manuscript we became aware of related works studying the same or related transitions in the surface code. Ref.~\cite{behrends_statistical_2024} maps the surface code under coherent single-qubit errors to a non-unitary transfer matrix which undergoes an entanglement phase transition.  Ref.~\cite{bao_phases_2024} uses a staircase circuit similar to ours and also uncovers an entanglement phase transition. Neither work discusses unitarity of the logical operation or emergence of unitary designs. 

\acknowledgments
This material is based upon work supported by the Defense Advanced Research Projects Agency (DARPA) under Agreement No. HR00112490357.
M.J.G. acknowledges support from the NSF QLCI award no. OMA-2120757. The views, opinions and/or findings expressed are those of the authors and should not be interpreted as representing the official views or policies of the Department of Defense or the U.S. Government. Numerical simulations were performed in part on HPC resources provided by the Texas Advanced Computing Center (TACC) at the University of Texas at Austin.

\appendix

\section{Proof of Theorem \ref{theo:logicalunitary}}\label{ap:theo1proof}
Here we present a proof of Theorem~\ref{theo:logicalunitary}. 
We start by writing the logical density matrix of our $[\![n,1,d]\!]$ stabilizer code in the form
\begin{align}
    \rho_L=\frac{\mathbb{I}+\(\v{n}\cdot\v{\sigma}_L\)}{2}\Pi_{\boldsymbol{0}},
\end{align}
where $\v{\sigma}_{L}=\(X_L, Y_L, Z_L\)$ is the vector of logical operators. 
We consider the anti-unitary time-reversal operator $\mathcal{T} = \mathcal{K} \otimes_{j=1}^n\(iY_j\)$, where $\mathcal K$ is complex conjugation. This anti-unitary transformation acts by conjugation on Pauli strings $P$ as
\begin{align}
    \mathcal{T} P \mathcal{T}^{-1} = (-1)^{|P|}P,
\end{align}
where $|P|$ is the Pauli weight of the Pauli string (number of non-identity Pauli matrices in the string).

It is easy to show using conditions (1) and (2) that any representative of any logical operator must have odd weight. It is helpful to prove the following simple fact:
\begin{fact} \label{fact:weight}
    Consider two Pauli operators $A$ and $B$.
    If $AB = BA$, then $C = AB$ is a Hermitian Pauli operator of weight $|C| = |A| + |B| \mod 2$.
    If $AB = -BA$, then $C = AB/i$ is a Hermitian Pauli operator of weight $|C| = |A| + |B| + 1\mod 2$. 
\end{fact}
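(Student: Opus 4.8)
The plan is to argue qubit-by-qubit, exploiting the tensor-product structure of Pauli operators. First I would dispatch the Hermiticity claim, which is purely algebraic: writing $A$ and $B$ as Hermitian Paulis, one has $(AB)^\dagger = B^\dagger A^\dagger = BA$, so in the commuting case $(AB)^\dagger = AB$ (Hermitian), while in the anticommuting case $(AB)^\dagger = -AB$, whence $AB/i$ is Hermitian. Since the product of two Pauli strings is again a Pauli string up to a phase, the stated phase convention ($C=AB$ or $C=AB/i$) is exactly the one that renders $C$ a \emph{Hermitian} Pauli operator.

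The substance is the weight formula, which I would obtain by a position-by-position census. At each qubit $j$ I classify the pair $(a_j, b_j)$ of single-qubit Paulis by counts: let $n_A$ (resp.\ $n_B$) be the number of positions where only $a_j$ (resp.\ only $b_j$) is nontrivial, let $n_=$ be the number where both are nontrivial and equal, and let $n_{\neq}$ be the number where both are nontrivial and distinct. I can then read off $|A| = n_A + n_= + n_{\neq}$ and $|B| = n_B + n_= + n_{\neq}$. Crucially, $|C| = n_A + n_B + n_{\neq}$, because equal nontrivial single-qubit Paulis multiply to the identity (killing their contribution to the weight), while distinct ones multiply to the third Pauli (so they survive). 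Subtracting gives $|A| + |B| - |C| = 2n_= + n_{\neq}$, hence the key congruence $|C| \equiv |A| + |B| + n_{\neq} \pmod 2$.

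It remains to link $n_{\neq}$ to the global commutation relation, which is the heart of the argument. Two single-qubit Paulis commute unless they are both nontrivial and distinct, in which case they anticommute; collecting these local signs gives $AB = (-1)^{n_{\neq}} BA$. Thus $n_{\neq}$ is even precisely when $A$ and $B$ commute and odd precisely when they anticommute. Substituting into the congruence yields $|C| \equiv |A| + |B| \pmod 2$ in the commuting case and $|C| \equiv |A| + |B| + 1 \pmod 2$ in the anticommuting case, as claimed. I do not expect a genuine obstacle here: the only point requiring care is the bookkeeping that isolates $n_{\neq}$ as the single quantity controlling both the weight parity and the commutation sign. Once that observation is in hand, both cases follow from one parity substitution.
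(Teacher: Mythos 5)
Your proof is correct, but it takes a genuinely different route from the paper's. You argue by a position-by-position census: you isolate the count $n_{\neq}$ of sites where both single-qubit factors are nontrivial and distinct, show that $|A|+|B|-|C| = 2n_= + n_{\neq}$, and observe that the same quantity $n_{\neq}$ controls the commutation sign via $AB = (-1)^{n_{\neq}}BA$; the two cases then follow from one parity substitution. The paper instead conjugates by the time-reversal operator $\mathcal{T} = \mathcal{K}\prod_j(iY_j)$, using the previously established identity $\mathcal{T}P\mathcal{T}^{-1} = (-1)^{|P|}P$: multiplicativity of conjugation gives $\mathcal{T}AB\mathcal{T}^{-1} = (-1)^{|A|+|B|}AB$, and comparing with $\mathcal{T}C\mathcal{T}^{-1} = (-1)^{|C|}C$ (or, in the anticommuting case, with $\mathcal{T}iC\mathcal{T}^{-1} = -(-1)^{|C|}iC$, where antiunitarity supplies the extra sign) yields the weight parity in one line. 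Your argument is more elementary and self-contained --- it does not presuppose the $\mathcal{T}$-conjugation identity, which itself would be proved by essentially the same per-qubit bookkeeping --- while the paper's version is shorter and reuses machinery that is central to the proof of Theorem~\ref{theo:logicalunitary}. One small point of care: when you write $(AB)^\dagger = -AB$ in the anticommuting case, it is worth noting explicitly that this forces the scalar prefactor of the underlying Pauli string to be $\pm i$, so that $C = AB/i$ carries a real sign and is indeed a \emph{Hermitian Pauli operator} in the usual sense; your phrasing already gestures at this, and it is not a gap.
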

\begin{proof}
    This follows straightforwardly from $\mathcal{T} AB\mathcal{T}^{-1} = (-1)^{|A| + |B|} AB = \mathcal{T} C \mathcal{T}^{-1} = (-1)^{|C|}C$ in the first case, and $\mathcal{T} AB\mathcal{T}^{-1} = (-1)^{|A| + |B|} AB = \mathcal{T} iC\mathcal{T}^{-1} = -(-1)^{|C|}iC$ in the second (using anti-unitarity of $\mathcal T$).
\end{proof}
Since stabilizers commute with each logical operation, it follows that for any stabilizer $g$ and any logical $\sigma_L$ we have $|\sigma_L g| = |\sigma_L| + |g| \mod 2$; using $|g| = 0\mod 2$ (condition (1)) this gives $|\sigma_L g| = |\sigma_L| \mod 2$. So all representatives of a given logical have the same Pauli weight modulo 2. 
Additionally, since $d_x$ and $d_z$ are both odd, we have that $|X_L|$ and $|Z_L|$ are odd, and since $X_L$ and $Z_L$ anticommute, Fact~\ref{fact:weight} gives $|Y_L| = |X_L| + |Z_L| + 1 \mod 2$ and thus $|Y_L|$ is odd as well. We conclude that all nontrivial logical operators have odd Pauli weight.

Based on this fact we can decompose $\rho_L$ as
\begin{align}
    \rho_L=\rho_++\rho_-,
\end{align}
with $\rho_+ = \frac{1}{2} \Pi_0 $ made exclusively of even-weight Pauli operators and $\rho_-= \frac{1}{2} \Pi_0\(\v{n}\cdot\v{\sigma}_L\)$ made exclusively of odd-weigth Pauli operators: $\mathcal{T} \rho_\pm \mathcal{T}^{-1} = \pm \rho_\pm$. 

Condition (3) states that $[U,\mathcal{T}] = 0$, therefore 
the state after application of $U$ becomes
\begin{align}
U\rho_LU^\dagger
= U \rho_+ U^\dagger + U \rho_- U^\dagger 
= \rho_+' + \rho_-',
\end{align}
with $\rho'_{\pm} = U\rho_\pm U^\dagger$ such that $\mathcal{T} \rho'_\pm \mathcal{T}^{-1} = \pm \rho'_\pm$: $U$ does not mix the time-reversal-even and time-reversal-odd sectors. 
Now the probability of measuring syndrome $\v s$ on state $U\rho_LU^\dagger$ is
% \begin{align}\label{eq:syndrome_probability}
%     p(\v{s})=&\Tr\(C_{\v{s}}\Pi_{\v{s}} U\rho_L U^\dagger\Pi_{\v{s}}C_{\v{s}}^\dagger\)\nonumber\\
%     =&\Tr\(C_{\v{s}}\Pi_{\v{s}} U_+\rho_+ U_+^\dagger\Pi_{\v{s}}C_{\v{s}}^\dagger\)\nonumber\\
%     =&\frac{1}{2}\Tr\(C_{\v{s}}\Pi_{\v{s}} U_+\Pi_0 U_+^\dagger\Pi_{\v{s}}C_{\v{s}}^\dagger\),
% \end{align}
\begin{align}
    p(\v{s}) & = \Tr\(\Pi_{\v{s}} U\rho_L U^\dagger \)
    = \Tr(\Pi_{\v s} \rho'_+) + \Tr(\Pi_{\v s} \rho'_-),
\end{align}
with $\Pi_{\v s}$ the projector on syndrome $\v s$. It can be easily seen that
\begin{align}
    \Pi_{\v s} 
    = \prod_{i=1}^{n-1} \frac{\mathbb{I} + (-1)^{s_i} g_i}{2} 
\end{align}
contains only even-weight operators (stabilizers), therefore $\rho'_- \Pi_{\v s}$ contains only odd-weight operators (logicals). As the latter are traceless, we get 
\begin{align}
    p(\v{s})
    & = \Tr(\Pi_{\v s} \rho'_+) 
    = \frac{1}{2} \Tr(\Pi_{\v 0} U\Pi_{\v s} U^\dagger), 
\end{align} 
which is independent of the input logical state. 
In words, all the logical information is stored in odd-weight operators, but the syndrome measurements only probe even-weight operators, and so learn no logical information. 
At this point Theorem~\ref{theo:logicalunitary} follows from application of Fact~\ref{fact:unitarity}. 

\section{Additional data on the distribution of logical unitaries}\label{ap:distribution}

In this appendix we provide more fine-grained data on the distributions of logical unitaries in both phases. This gives additional evidence for our claim of unitary designs above the error correction threshold. 

To simplify our analysis, we consider {deterministic, uniform} coherent errors $U_i=\exp(i\alpha\v{n}\cdot\v{\sigma}_i)$, where $\v{n}=(1, 1, 1)/\sqrt{3}$ and $\v{\sigma}_i=(X_i, Y_i, Z_i)$. 
The angle $\alpha$ plays a similar role to the density of coherent errors $p$ used in the main text and tunes a design/error-correction phase transition in the projected ensemble. 
Data shown in Fig.~\ref{fig:D4_111} on the 4-design distance $\Delta^{(4)}$ indicates a design phase above $\alpha \gtrsim 0.18\pi$. Note that for both $\alpha = 0$ and $\alpha = \pi/3$ we have Clifford coherent errors, which cannot form more than 1-designs (see App.~\ref{ap:clifford}), so an upper limit to the design phase is expected before $\alpha = \pi/3$, hence the non-monotonic behavior of $\Delta^{(4)}$ in Fig.~\ref{fig:D4_111}. 

\begin{figure}
    \centering
    \includegraphics[width=1.0\linewidth]{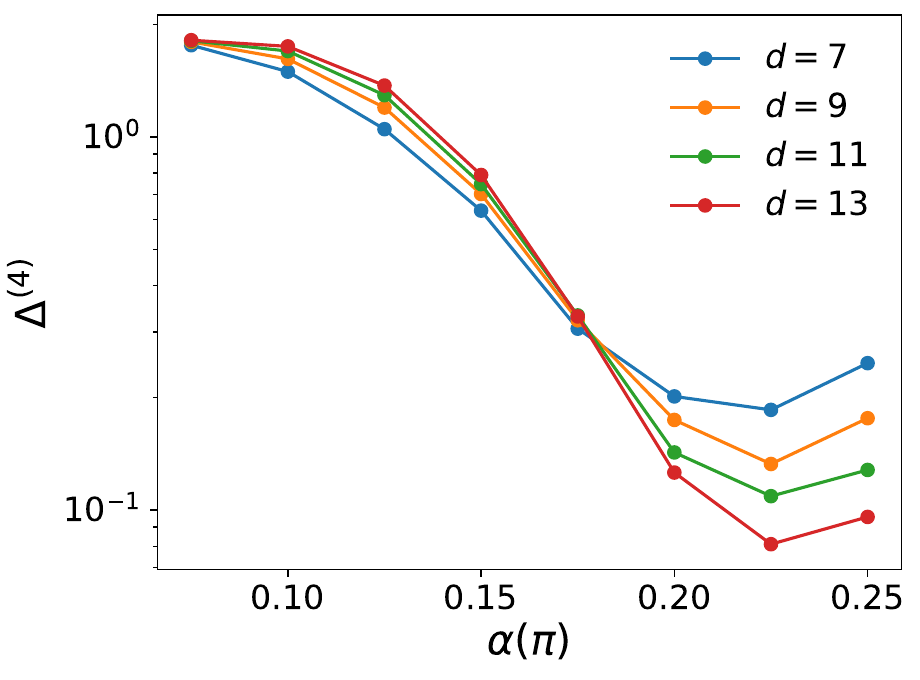}
    \caption{Unitary $4$-design distance $\Delta^{(4)}$ with random decoder for transversal coherent error $U_i=\exp(i\alpha\v{n}\cdot\v{\sigma}_i)$, with $\v{n}=(1, 1, 1)/\sqrt{3}$. Data are sampled over 10000 instances.}
    \label{fig:D4_111}
\end{figure}

To investigate the unitary distributions in both phases, we choose $\alpha=0.15\pi$ as representative of the error-correcting phase and $\alpha=0.225\pi$ as representative of the unitary-design phase. 
We parametrize the logical unitary in terms of three angles $\beta_{\v s}$, $\theta_{\v s}$ and $\phi_{\v s}$ as
\begin{align}\label{eq:U_para}
    U(\v{s}) 
    & = \exp\(i\beta_{\v{s}} \v{n}_{\v s} \cdot \boldsymbol{\sigma}_L\), \\ 
    \v{n}_{\v s} 
    & = \(\sin\theta_{\v{s}}\cos\varphi_{\v{s}}, \sin\theta_{\v{s}}\sin\varphi_{\v{s}}, \cos\theta_{\v{s}}\). 
\end{align}
For the Haar distribution, the probability density function of $(\beta,\theta,\varphi)$ is given by
\begin{align}
    p(\beta, \theta, \varphi) \propto \sin^2\beta\sin\theta. \label{eq:angle_distribution_haar}
\end{align}

In Fig.~\ref{fig:dist_111}, we compare the marginal distributions of $\beta$, $\theta$, and $\varphi$ between the Haar ensemble and the projected ensemble at $\alpha=0.15\pi$ and $\alpha=0.225\pi$. The former clearly shows deviations from Haar in the form of sharp peaks around multiples of $\pi/2$ for each angle, which are characteristic of Pauli unitaries; the peaks become sharper as the distance $d$ increases form 7 to 13, consistent with convergence toward a random Pauli distribution.
On the contrary, for $\alpha = 0.225\pi$ we see close agreement with the Haar prediction already at $d = 7$, with further improvement (visible especially in $\theta$) as we scale up to $d = 13$. 
This provides fine-grained evidence of the approach to the Haar distribution in the design phase. 

\begin{figure*}
    \centering
    \includegraphics[width=1.0\linewidth]{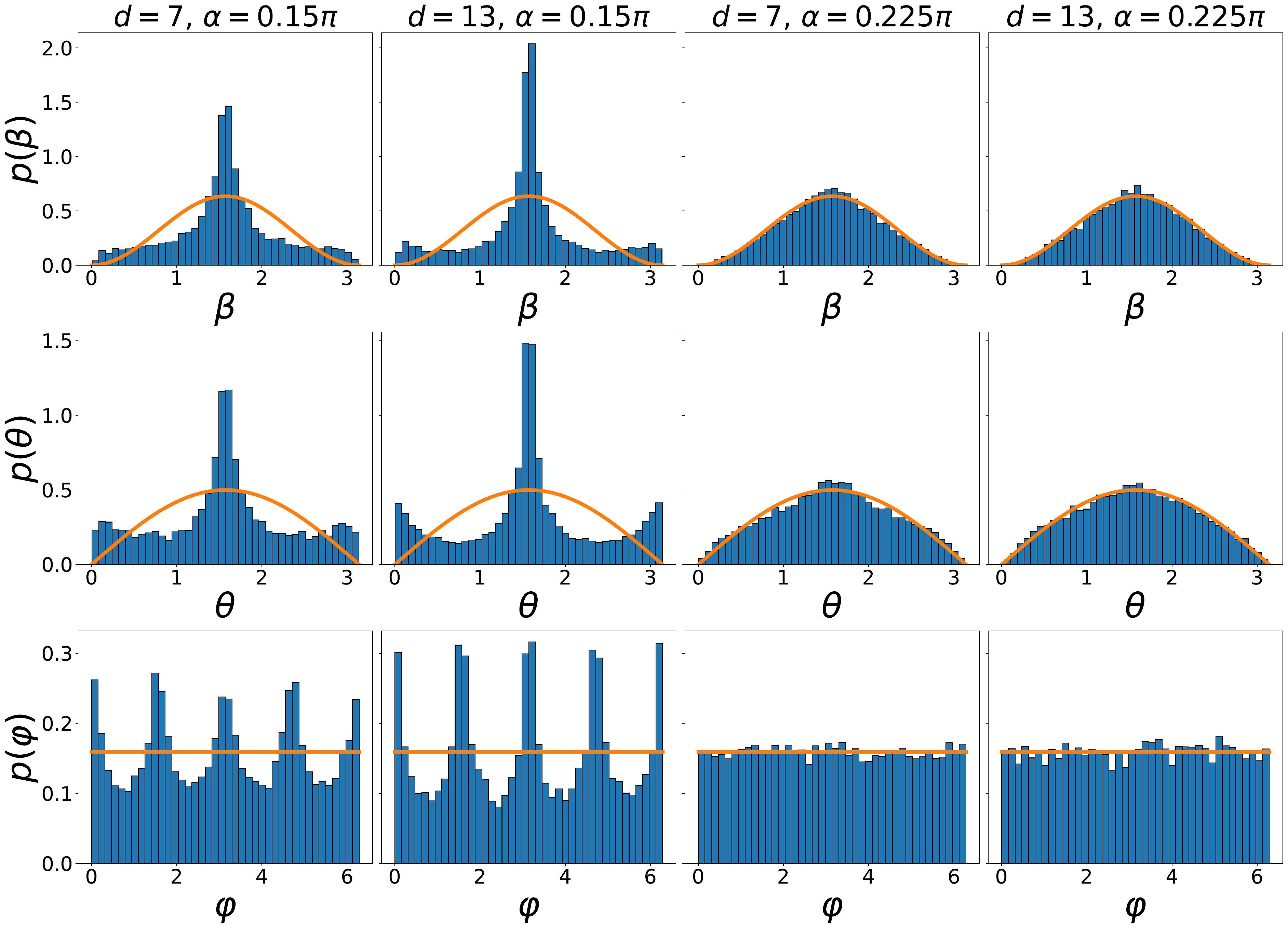}
    \caption{Distributions of unitary parameters $\beta,\theta,\phi$ in Eq.\eqref{eq:U_para} in the projected ensembles for code distances $d=7$ and $13$, and for coherent error strengths $\alpha=0.15\pi$ (below threshold) and $0.225\pi$ (above threshold). 
    The probability density functions for the Haar distribution, Eq.~\eqref{eq:angle_distribution_haar}, are shown by orange lines. Data are sampled over 10000 syndromes.}
    \label{fig:dist_111}
\end{figure*}

\section{Clifford simulation of error correction threshold and entanglement transition}\label{ap:clifford}

In this appendix, we numerically investigate the logical unitary design phase transition and entanglement phase transition in Clifford circuits, where we replace the Haar-random coherent errors by random Clifford ones. Since the encoding and decoding circuits are themselves Clifford, the overall circuit is amenable to large-scale classical simulation via stabilizer formalism~\cite{Aaronson2004}. Part of our simulation uses the \textit{QuantumClifford.jl} package~\cite{Krastanov2024}. 

We compute the tripartite mutual information $I_3$ as a diagnostic of the entanglement phase transition~\cite{Zabalo2020critical} , with $I_3$ defined as
\begin{align}
    I_3=&S(A)+S(B)+S(C)-S(A\cup B)\nonumber\\
    &-S(A\cup C)-S(B\cup C)+S(A\cup B\cup C), 
\end{align} 
where $S(X)$ is the von Neumann entropy of region $X$ and $A, B, C$ are contiguous subsystems of linear size $(d-1)/4$.

An important feature of the projected ensemble of logical unitaries in the Clifford case is that it can form at most a $1$-design. This follows from its limited cardinality: at most four distinct elements belong to $\mathcal{E}_L$. 
One can show this by writing the Choi state associated to a logical unitary $U_{L,\v{s}}$:
\begin{align}
    |U(\v{s})\rangle=\frac{1}{\sqrt{2}}\(|0_R\rangle U_L(\v{s})|0_L\rangle+|1_R\rangle U_L(\v{s})|1_L\rangle\).
\end{align}
This is a stabilizer state of two qubits and as such can be represented by two stabilizer generators $g_1=P_{1,R}P_{1,L}$, $g_2=P_{2,R}P_{2,L}$ where $P_{1/2, R/L}$ are Pauli matrices on the two qubits that we call $L$ and $R$. Unitarity of $U_L$ implies that this state is maximally entangled, i.e. $P_{1, R/L}\neq P_{2, R/L}$. For stabilizer states, the outcome of projective measurements can only affect the ``phase bits'' in the stabilizer tableau, i.e. the $\pm 1$ signs in front of each stabilizer generator. So for any given syndrome $\v s$ we have
\begin{align}
    |U(\v{s})\rangle\langle U(\v{s})| \in \left\{ \frac{\mathbb{I}\pm g_1}{2}\frac{\mathbb{I}\pm g_2}{2} \right\}
\end{align}
which as claimed contains only four elements.
More specifically we have 
\begin{align}
    U(\v{s})=CP(\v{s}),
\end{align}
where $C$ is some Clifford gate independent of $\v s$ and $P(\v{s})$ is a Pauli gate that may depend on $\v s$. As a result, the ensemble consisting of $CP(\v{s})$ can only form a unitary $1$-design.

We summarize our numerical results for Clifford simulation in Fig.~\ref{fig:clifford} on purification time $\tau$, tripartite mutual information $I_3$, logical error rate $\Delta_{\rm EC}$. Their critical points are all around $p_c\approx0.945$, indicating the coincidence of entanglement, purification, error correction, and unitary design phase transitions.

\begin{figure}
    \centering
    \includegraphics[width=1.0\linewidth]{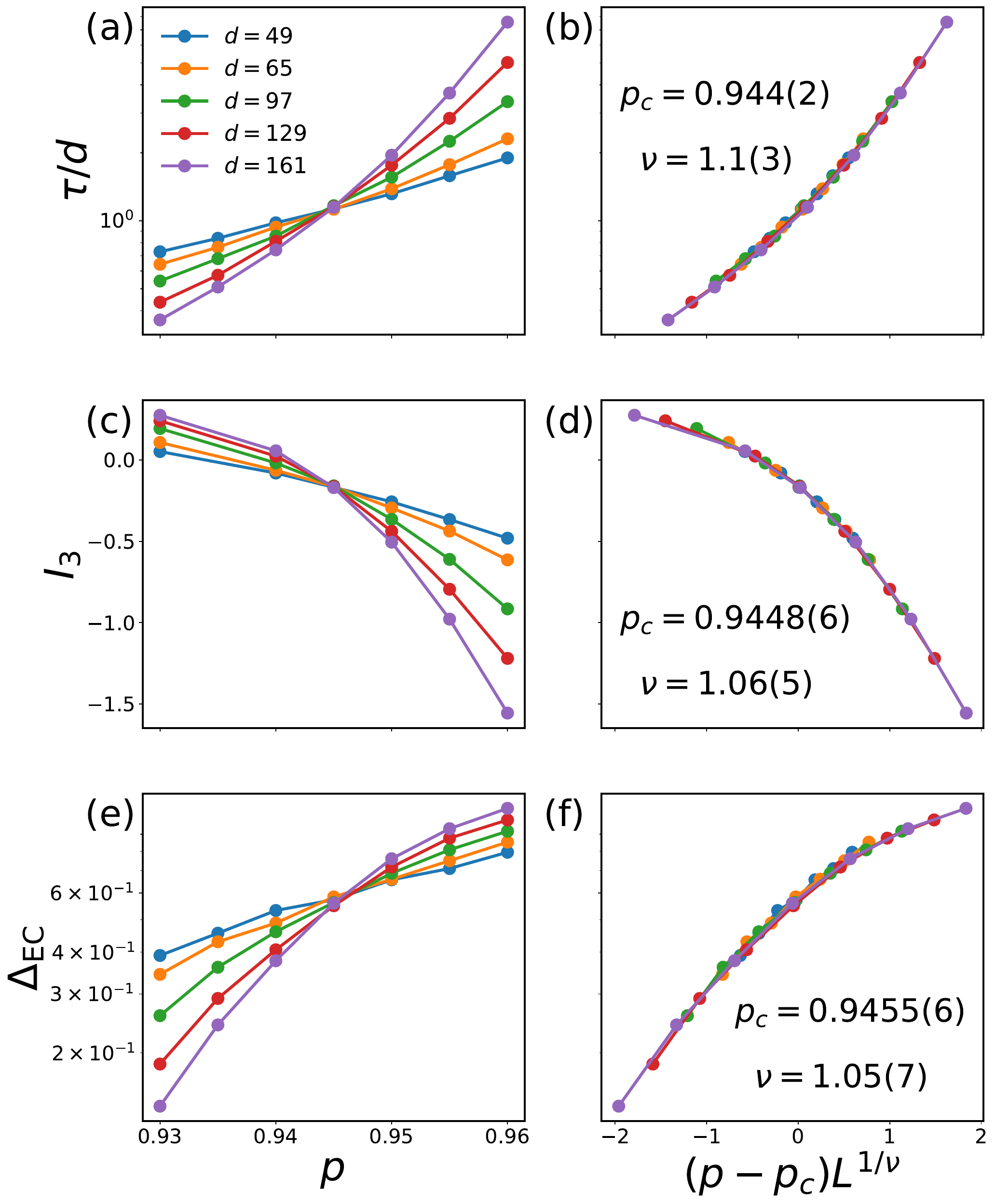}
    \caption{(a,b) Purification times $\tau$, (c,d) tripartite mutual information $I_3$, (e,f) logical error rate $\Delta_{\rm EC}$ under the optimal decoder as a function of the probability $p$ of applying random single-qubit Clifford operators. (b,d,f) represent scaling collapse of the data in (a,c,e). The data are averaged over $5120$ realizations.}
    \label{fig:clifford}
\end{figure}

\section{Decomposition of surface codes}\label{ap:decomposition}

\begin{figure}
    \centering
    \includegraphics[width=1.0\linewidth]{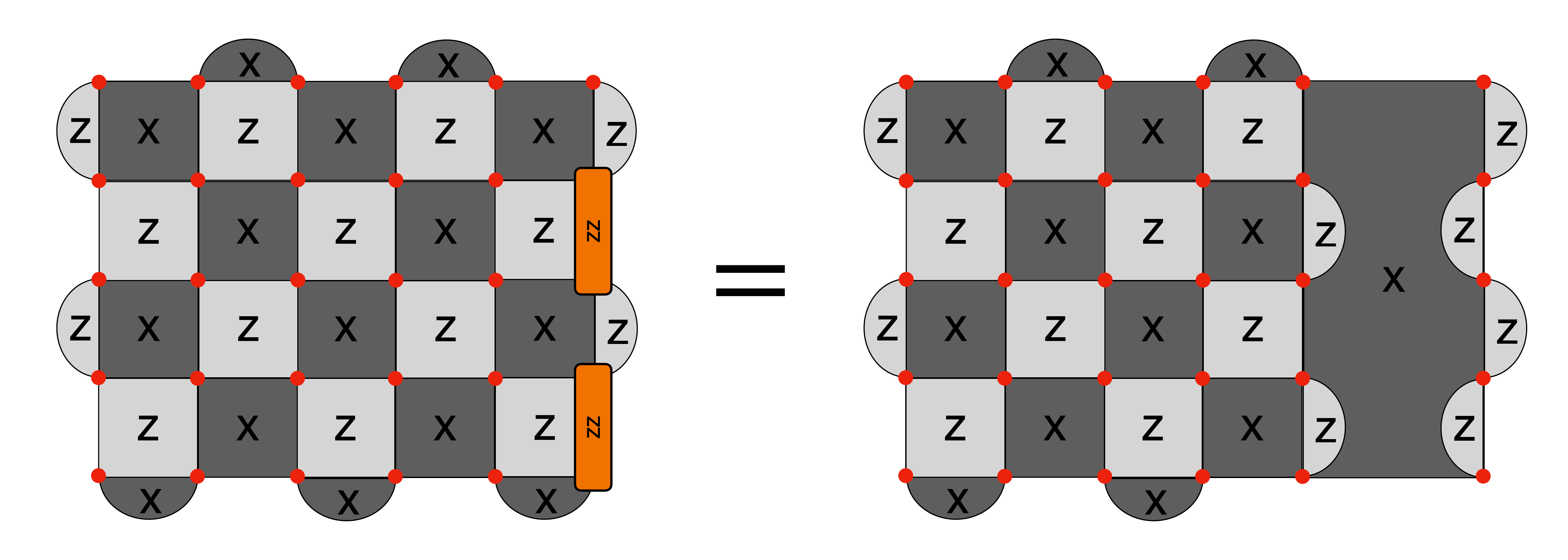}
    \caption{Decomposition of a $6\times5$ surface code into a $5\times5$ surface code and $d=5$ repetition code after some edge $ZZ$ measurements.}
    \label{fig:decomposition}
\end{figure}

Here we explain the decomposition of surface codes we used in Sec.~\ref{sec:connection} and Fig.~\ref{fig:GHZ}(b) to connect unitary design and entanglement phases. The main statement is that, after some $ZZ$ measurements (or $XX$ depending the orientation of the surface code) on an edge, the stabilizers of a $(d_x+1)\times d_y$ surface code can be decomposed into stabilizers of a $d_x\times d_y$ surface code and a $1\times d_y$ repetition code. Different measurement outcomes correspond to different signs of stabilizers at the edge. 

In Fig.~\ref{fig:decomposition}, we show this explicitly for the case $d_x=d_y=5$. At the same time, if we start from a Bell pair state between the logical $L$ of the $(d_x+1)\times d_y$ surface code and a reference qubit $R$
\begin{align}
    \rho_{RL} \propto \frac{\mathbb{I}+X_RX_L}{2}\frac{\mathbb{I}+Z_RZ_L}{2},
\end{align}
after the decomposition, we have a three qubit $X$-basis GHZ state over the reference $R$, the logical $L'$ of the $d_x\times d_y$ surface code, and the logical $Q$ of the $1\times d_y$ repetition code:
\begin{align}\label{eq:rho_RLQ}
    \rho_{RL'Q}\propto \frac{\mathbb{I}+X_RX_{L'}}{2}\frac{\mathbb{I}+Z_RZ_{L'}Z_Q}{2}\frac{\mathbb{I}+X_{L'}X_Q}{2},
\end{align}
where we used the fact that $X_L=X_{L'}$, $Z_L=Z_{L'}Z_Q$ and the last $X_{L'}X_Q$ stabilizer comes from the large $X$ plaquette in Fig.~\ref{fig:decomposition} (that originates by merging smaller $X$ plaquettes due to the $ZZ$ measurements).

\section{Upper bound of averged purity in the reference qubit}\label{ap:upperbound}
Here we prove Eq.\eqref{eq:upperbound} for the state in Fig.~\ref{fig:GHZ}(a) in the design phase.  The state is explicitly given by Eq.\eqref{eq:RLQM}:
\begin{align}
    |\psi_{\v{s}l}\rangle_{RQM} = \sum_{\v m}\sqrt{2p_{\v{m}|\v{s}}}\langle l_{L'}|U_{\v{m},\v{s}}|{\rm GHZ}\rangle_{RL'Q}|\v m\rangle_M,
\end{align}
where $l$, $\v s$, $\v m$ are the measurement outcomes of $L'$, syndrome, and $ZZ$ operators on the last column respectively, and $p_{\v{m}|\v{s}}$ is the conditional proability of $\v m$ given $\v s$. 

The density matrix of the reference qubit $R$ is 
\begin{align}
    \rho(\v{s},l)_R 
    & = {\rm Tr}_{QM} (\ket{\psi_{\v{s}l}} \bra{\psi_{\v{s} l}} ) \nonumber \\
    & = \sum_{l'=\pm } \ketbra{l'}_R \left(\sum_{\v m} p_{\v{m} | \v{s}} |\bra{l} U_{\v{m},\v{s}}\ket{l'}|^2 \right).
\end{align}
The averaged purity thus reads
\begin{align}\label{eq:upperbound_derive}
    \langle \mathcal{P} \rangle_R =& \frac{1}{2}\sum_{\v s,l, l'}p_{\v s} \(\sum_{\v m}p_{\v m|\v s}|\langle l|U_{\v m, \v s}|l'\rangle|^2\)^2\nonumber\\
    \leq& \frac{1}{2}\sum_{\v s,l, l'}p_{\v s} \sum_{\v m} p_{\v m|\v s} |\langle l|U_{\v m, \v s}|l'\rangle|^4\nonumber\\
    =&\frac{1}{2}\sum_{\v m,l, l'}p_{\v m} \sum_{\v s}p_{\v s|\v m}|\langle l|U_{\v m, \v s}|l'\rangle|^4,
\end{align}
where we have used the fact $p_l=\frac{1}{2}$ and the Cauchy–Schwarz inequality $(\sum_{\v m} a_{\v m} b_{\v m})^2 \leq (\sum_{\v m} a_{\v m}^2) (\sum_{\v m} b_{\v m}^2)$ for vectors $a_{\v m} = \sqrt{p_{\v m|\v s}}$ and $b_{\v m} = \sqrt{p_{\v m|\v s}} |\langle l|U_{\v m, \v s}|l'\rangle|^2 $. We have also used the definition of conditional probability rule $p_{\v m|\v s} p_{\v s} = p_{\v s|\v m} p_{\v m}$. 

In the design phase, for all $\v m$ the ensemble $\mathcal{E}_{\v m} \equiv \{p_{\v s|\v m}, U_{\v m, \v s}\}$ flows to the Haar distribution, since different $\v m$'s differ only by the sign of certain stabilizers in the state $\rho_{RL'Q}$, Eq.~\eqref{eq:rho_RLQ}, which affects the ensemble by at most an overall Pauli rotation. 
Replacing $\sum_{\v s}p_{\v s|\v m} |\langle l|U_{\v m, \v s}|l'\rangle|^4 \to \mathbb{E}_{U\sim {\rm Haar}} [|\langle l|U|l'\rangle|^4] = \frac{1}{3}$ in the right hand side of Eq.\eqref{eq:upperbound_derive} yields $\langle \mathcal{P} \rangle_R  \leq \frac{2}{3}$.

\bibliography{logical_unitary_design}

\end{document}